\newcommand\BibTeX{{\rmfamily B\kern-.05em \textsc{i\kern-.025em b}\kern-.08em
T\kern-.1667em\lower.7ex\hbox{E}\kern-.125emX}}
\newcommand{\Ex}{\mathbb{E}}
\newcommand{\var}{\textup{Var}}
\newcommand{\se}{\textrm{se}}
\newcommand{\Dcon}{\stackrel{D}{\rightarrow}}
\newcommand{\Pcon}{\stackrel{P}{\rightarrow}}
\newtheorem{theorem}{\underline{\bf Theorem}}
\newtheorem{remark}{\underline{\bf Remark}}
\newtheorem{lemma}{\underline{\bf Lemma}}
\begin{document}

\runninghead{Li et al.}

\title{Testing for treatment effect in covariate-adaptive randomized trials with generalized linear models and omitted covariates}

\author{Yang Li\affilnum{1}, Wei Ma\affilnum{2}, Yichen Qin\affilnum{3} and Feifang Hu \affilnum{4}}

\affiliation{\affilnum{1}Center for Applied Statistics and School of Statistics, Renmin University of China, Beijing, China\\
\affilnum{2}Institute of Statistics and Big Data, Renmin University of China, Beijing, China\\
\affilnum{3}Department of Operations, Business Analytics, and Information Systems, University of Cincinnati, Cincinnati, OH, USA\\
\affilnum{4}Department of Statistics, George Washington University, Washington, DC, USA}

\corrauth{Wei Ma, Institute of Statistics and Big Data, Renmin University of China,
59 Zhongguancun St, Beijing, China}

\email{mawei@ruc.edu.cn}

\begin{abstract}
Concerns have been expressed over the validity of statistical inference under covariate-adaptive randomization despite the extensive use in clinical trials. In the literature, the inferential properties under covariate-adaptive randomization have been mainly studied for continuous responses; in particular, it is well known that the usual two sample t-test for treatment effect is typically conservative. This phenomenon of invalid tests has also been found for generalized linear models without adjusting for the covariates and are sometimes more worrisome due to inflated Type I error. The purpose of this study is to examine the unadjusted test for treatment effect under generalized linear models and covariate-adaptive randomization. For a large class of covariate-adaptive randomization methods, we obtain the asymptotic distribution of the test statistic under the null hypothesis and derive the conditions under which the test is conservative, valid, or anti-conservative. Several commonly used generalized linear models, such as logistic regression and Poisson regression, are discussed in detail. An adjustment method is also proposed to achieve a valid size based on the asymptotic results. Numerical studies confirm the theoretical findings and demonstrate the effectiveness of the proposed adjustment method.
\end{abstract}

\keywords{Covariate-adaptive randomization, Generalized linear model, Omitted covariate, Treatment effect, Unadjusted test}

\maketitle

\section{Introduction}\label{sec:Introduction}
Covariate-adaptive randomization is used in clinical trials to balance 
treatment arms with respect to key covariates. 
Many covariate-adaptive randomization methods have been proposed.
Stratified randomization aims to reduce 
the imbalance of treatment assignments within strata formed by covariates.
A restricted randomization method, such as permuted block design 
\citep{Zelen1974} or Efron's biased coin design \citep{Efron1971}, is 
implemented within each stratum.
Pocock and Simon's minimization \citep{Taves1974, Pocock1975} was proposed to reduce the imbalance at a marginal level of covariates.
This approach was extended by Hu and Hu \cite{Hu2012} to simultaneously reduce various levels of imbalances.
As covariate-adaptive randomization is usually based on discrete covariates 
\citep{Taves2010, Lin2015, Ciolino2019}, this paper focuses mainly on randomization 
methods that balance discrete covariates.
Some covariate-adaptive randomization methods have also been 
proposed to balance continuous covariates \citep{Frane1998, 
	Stigsby2010, Su2011, Ma2013, Zhao2015}.
For more discussion of handling covariates in clinical
trials, see McEntegart \cite{McEntegart2003}, Rosenberger and Sverdlov \cite{Rosenberger2008}, Hu et al. \cite{Hu2014} and the
references therein.

Although covariate-adaptive randomization is widely used in clinical trials, concerns have been expressed over the validity of statistical inference after 
these randomization procedures.
In practice, conventional methods, such as two sample \textit{t}-test or 
generalized linear models, are commonly used without consideration of the covariate-adaptive randomization scheme.
Regulatory guidelines \citep{ICH1998,EMA2015} recommend that all the 
covariates used in covariate-adaptive randomization be adjusted in the 
working model.
In fact, it was theoretically demonstrated that such tests are valid 
provided that the working model is correctly specified and that all of the covariates used in the covariate-adaptive randomization are included \cite{Shao2010}.


{\color{black} However, unadjusted tests are prevalent in practice \citep{Sverdlov2015}. It was reported that only 24\% to 34\% of randomized clinical trials adjusted for covariates in their main analyses, and that, even for the trials that used at least one stratification factor in the randomization, only 26\% of these adjusted for all covariates in the primary analysis \citep{Kahan2014}.}
Several practical reasons are given not to use the full model.
It is difficult to incorporate some covariates into the working model; for example, investigation sites are usually omitted from the analysis for a multicenter clinical trial.
The simplicity of a testing procedure is another reason to include fewer covariates.
Also, the adjustment of too many covariates usually requires a complicated model that lacks robustness to model misspecification.
Further discussion can be found in Shao et al. \cite{Shao2010}, Shao and Yu \cite{Shao2013}, and Ma et al. \cite{ Ma2015}.
It is therefore desirable to study the inference properties of conventional tests when some or all of the covariates used in 
covariate-adaptive randomization are excluded from the working model.

The validity of statistical inference under covariate-adaptive 
randomization is well studied in the linear model framework.
Most notably, the unadjusted \textit{t}-test has been theoretically demonstrated to be conservative 
for a variety of covariate-adaptive designs, including the stratified permuted block
design and Pocock and Simon's minimization method \citep{Shao2010, Ma2015, Bugni2018, Ma2019, Ma2020Regression}.
{\color{black} Extensions to multiple treatments have also been studied \cite{Bugni2019, Ma2020Sinica}. Moreover, unadjusted testing was also found to be conservative in longitudinal analysis \cite{Weng2017}, survival analysis \cite{Ye2020JRSSB}, and quantile regression \cite{Zhang2020}.  }

The theoretical foundation for inference under covariate-adaptive randomization 
is less comprehensive in the generalized linear model framework.
Shao and Yu \cite{Shao2013} showed that a model-free $t$-test with no covariate 
is conservative under a specific type of stratified randomization.
However, the inference properties are not studied for testing based on 
generalized linear models, which are more popular in practice.
Moreover, their results do not apply to other commonly used 
covariate-adaptive randomization methods, such as Pocock and Simon's 
minimization, or some newly proposed designs, such as that of Hu and Hu \cite{Hu2012}.
Gail \cite{Gail1988} studied model-based unadjusted tests for treatment effect under 
perfectly balanced studies, but such perfect 
balance is not guaranteed under covariate-adaptive randomization; thus, the 
application of this study's conclusions is also limited.
{\color{black} Fan et al. \cite{Fan2018} considered the use of generalized linear models under covariate-adaptive randomization when covariates are subject to misclassification. However, their results are mainly based on simulations and lack theoretical justifications.}

The purpose of this study is to examine the theoretical properties of the unadjusted test for treatment effect under generalized linear models for a large family of 
covariate-adaptive randomization.
We establish the asymptotic properties of such tests for general 
covariate-adaptive randomization, including both stratified randomization and 
Pocock and Simon's minimization method, among other commonly used methods.
We derive the conditions under which the test is conservative, valid, or 
anti-conservative and propose an adjustment method to achieve a valid size. 
Several important generalized linear models, such as logistic regression and 
Poisson regression, are discussed in detail.
In addition, simulations show that the adjusted tests under covariate-adaptive 
randomization are more powerful than those under complete randomization.
The results provide insights and guidance for testing the treatment effect when using generalized linear models under covariate-adaptive randomization.

This paper is organized as follows.
In Section \ref{sec:TestingGLM}, we first describe the framework for studying the unadjusted
Wald test for treatment effect under covariate-adaptive 
randomization, and then give the asymptotic properties of the test when the canonical 
link is used in the generalized linear model.
In Section \ref{sec:TestingCommonModels}, we apply the derived results to 
several generalized linear models commonly used in clinical trials and 
discuss their testing properties.
In Section \ref{sec:TestingGLM-NC}, we establish general asymptotic results by 
extending the results in Section \ref{sec:TestingGLM} to cases of 
non-canonical link.
An adjustment method to achieve valid testing is proposed in Section 
\ref{sec:Adjustment}.
Section \ref{sec:Simulation} presents simulation results to examine 
the finite sample performance of the proposed theory.
The last section concludes with some remarks and directions for future work.
The proofs and additional simulation results are given in Appendix.

\section{Wald Test under Generalized Linear Models with 
	Canonical Link}\label{sec:TestingGLM}
\subsection{Framework and Notations}
We consider a clinical trial with two treatment groups, 1 and 2.
A covariate-adaptive randomization procedure is implemented to allocate 
patients based on their covariate profiles.
Let $X$ be the vector of the covariates and $T$ be the treatment assignment 
indicator, where $T=1$ if a patient is assigned to treatment 1, and $T=0$ 
otherwise.
Suppose that $Y$ is the response variable and the conditional expectation of $Y$ 
given $X$ and $T$ is
\begin{align}\label{eq:true_model}
\Ex[Y|T,X]=h(\mu+\delta T+\beta X)=h(\eta),
\end{align}
where $\delta$ is the treatment effect, $\beta$ is the vector of 
unknown parameters for the covariates, and $\eta=\mu+\delta 
T+\beta X$ is the linear predictor. 
The covariates are assumed to be independent and identically distributed for 
each patient.
Without loss of generality, it is also assumed that $\Ex[X]=0$.

Suppose that conditional on $\eta$, the distribution of $Y$ belongs to an 
exponential family and the likelihood of $Y$ is given by
\begin{align}
\exp\left\{\frac{Y\theta-b(\theta)}{\phi}+c(Y,\phi)\right\},
\end{align}
where $\theta$ depends on $\eta$, and $\phi$ is the dispersion parameter, which is assumed to be known.
Thus,  the relationship between $Y$ and $(T, X)$ can be modeled by 
generalized linear models \citep{McCullagh1989}.

We first consider the scenario where the canonical link function is 
used, i.e., $\theta=\eta$.
This covers several common generalized linear models, including 
linear regression with $h(\eta)=\eta$, logistic regression with 
$h(\eta)=\exp(\eta)[1+\exp(\eta)]^{-1}$, and Poisson regression with 
$h(\eta)=\exp(\eta)$.
The cases with non-canonical link functions are studied in Section 
\ref{sec:TestingGLM-NC}.

\subsection{Working Model and Wald Test}

Suppose that the covariates used in covariate-adaptive randomization are omitted 
from post-randomization data analysis; then the following working model 
is used:
\begin{align}\label{eq:working_model}
\Ex[Y|T]=h(\mu+\delta T)=h(\eta),
\end{align}
where the link function and conditional distribution of $Y$ given $\eta$ 
are the same as in the true model \eqref{eq:true_model}.

\begin{remark}
	It is important to note that model \eqref{eq:working_model} is a mis-specified working model without adjustment of any covariates used in randomization.
	The properties of the test for treatment effect under such a model mis-specification have been extensively studied for linear models both numerically and theoretically in the literature, including Birkett \cite{Birkett1985}, Forsythe \cite{Forsythe1987}, Shao et al. \cite{Shao2010}, and Ma et al. \cite{Ma2015}.
	As for generalized linear models, Gail \cite{Gail1988} studied the score test under model \eqref{eq:working_model} in perfectly balanced trials.
	However, the results are not applicable to general covariate-adaptive randomization, because perfect balance is usually not achieved. 
\end{remark}

To test the treatment effect based on the working model 
\eqref{eq:working_model}, we consider the Wald test, because it is commonly used in practice and is easily accessible in most statistical packages.
The Wald test can be formed as follows:
\begin{align}\label{eq:test}
H_0: \delta=0 \text{ versus } H_1: \delta \neq 0,
\end{align}
with the test statistic
\begin{align}\label{eq:test_stat}
S=\frac{\hat{\delta}}{\widehat{\se}(\hat\delta)},
\end{align}
where $\hat\delta$ is the maximum likelihood estimate (MLE) of $\delta$, 
and $\widehat{\se}(\hat\delta)$ is the model-based estimate of the standard 
error of $\hat{\delta}$.
To perform a two-sided test, the null hypothesis is rejected if 
$|S|>Z_{1-\alpha/2}$, where $Z_{1-\alpha/2}$ is the $({1-\alpha/2})$th quantile of 
a standard normal distribution.

Suppose that $n$ patients  in total are enrolled in the covariate-adaptive randomized 
trial.
Let $(y_i, T_i, x_i)$ be the observed value of $(Y, T, X)$ for the $i$-th 
patient.
The working model then has the following log-likelihood
\begin{align}\label{eq:log-likelihood}
\ell(\mu, \delta)=\sum_{i=1}^{n}\left\{
\frac{y_i\theta_i-b(\theta_i)}{\phi}+c(y_i,\phi)\right\},
\end{align}
where $\theta_i=\eta_i=\mu+\delta T_i$ under the canonical link.
Also, it follows from the properties of the exponential family that 
$b^{'}(\theta_i)=h(\eta_i)$.
By setting the first derivative of the log-likelihood, with respect to 
${\delta}$ and ${\mu}$, to be zero, the MLE estimates of $\mu$ and $\delta$ 
are  
\begin{align}\label{eq:muhat}
\hat{\mu}=h^{-1}\left[\frac{\sum_{i=1}^{n} (1-T_i)y_i}{n_0}\right],
\end{align}
and 
\begin{align}\label{eq:deltahat}
\hat{\delta}=h^{-1}\left[\frac{\sum_{i=1}^{n} T_iy_i}{n_1}\right]-
h^{-1}\left[\frac{\sum_{i=1}^{n} (1-T_i)y_i}{n_0}\right],
\end{align}
where $n_1=\sum_{i=1}^{n} T_i$, and $n_0=n-n_1$.

\subsection{Main Results}\label{sec:TestingGLM-results}
%
Before giving the main results, we describe how 
to measure various levels of covariate balance under a covariate-adaptive 
design, which is shown to be closely related to the test properties. 
{\color{black} Note that in this paper, we assume that the covariates $X$ are discrete variables.}

Consider $p$ covariates included in $X$ and $m_k$ levels for the $k$th 
covariate, 
resulting in $m=\prod_{i=1}^{p}m_k$ strata in total.
Denote the covariate profile of the $i$th patient $(t_1, \ldots, t_p)$ if 
the $k$th covariate is at level $t_k$, $1\le k\le p$ and $1\le t_k \le m_k$.
For convenience, we use $(t_1, \ldots, t_p)$ to denote the stratum formed by 
patients who possess the same covariate profile $(t_1, \ldots, t_p)$, use 
$(k; t_k)$ to denote the margin formed by patients whose $k$th covariate is at 
level $t_k$.
Then let
\begin{description}
	\item[-] $D_n$ be the difference between the numbers of patients in 
	treatment groups 1 and 2 as total, i.e., the number in group 1 minus the 
	number in group 2;
	\item[-] $D_n(k;t_k)$ be the differences between the 
	numbers of patients in the two treatment groups on the margin $(k;t_k)$;
	\item[-] $D_n(t_1, t_2,\ldots,t_p)$ be the difference 
	between the numbers of patients in the two treatment groups within the 
	stratum $(t_1, t_2,\ldots,t_p)$.
\end{description}

\begin{remark}
	Stratified randomization usually has strong balance properties.
	For example, stratified permuted block design and stratified biased 
	coin design, which use the permuted block design and Efron's biased coin 
	design within each stratum, have within-stratum imbalances that are 
	bounded in probability.
	As marginal imbalance and overall imbalance can be considered as sums of 
	certain within-stratum imbalances, the marginal imbalance and overall 
	imbalance are also bounded in probability for these two designs.
	In addition, Pocock and Simon's minimization can achieve 
	marginal and overall imbalances that are bounded in probability 
	\citep{Ma2015}. 
	In addition, Hu and Zhang \cite{Hu2020} proved that the 
	within-stratum imbalances increase with the order of $\sqrt{n}$.
\end{remark}	

Now we first present the theoretical results for the test for treatment 
effect if within-stratum imbalances are bounded in probability under a 
covariate-adaptive design. 

\begin{theorem}\label{thm:TestWithinStratumBounded}
	Suppose that a covariate-adaptive design satisfies the condition that all 
	within-stratum imbalances are bounded in probability; then under the null 
	hypothesis $H_0: \delta=0$, we have, as $n \rightarrow \infty$,
	\begin{align}
	S \Dcon N \left(0, \frac{\Ex[\var(Y|X)]}{\phi h^{'}h^{-1}(\Ex[Y])}\right).
	\end{align}
\end{theorem}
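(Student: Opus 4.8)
The plan is to treat the numerator $\hat\delta$ and the denominator $\widehat{\se}(\hat\delta)$ of $S$ separately, obtaining a distributional limit for the former and a probability limit for the latter, and then to combine them by Slutsky's theorem. Writing $\bar y_1=n_1^{-1}\sum_i T_i y_i$ and $\bar y_0=n_0^{-1}\sum_i(1-T_i)y_i$, we have $\hat\delta=h^{-1}(\bar y_1)-h^{-1}(\bar y_0)$ from \eqref{eq:deltahat}. Under $H_0$ the response satisfies $\Ex[Y\mid T,X]=h(\mu+\beta X)$, which does not involve $T$, so both group means are consistent for $\Ex[Y]$; a first-order Taylor expansion of $h^{-1}$ about $\Ex[Y]$ then yields $\hat\delta=[h'(h^{-1}(\Ex[Y]))]^{-1}(\bar y_1-\bar y_0)+o_p(n^{-1/2})$, provided $\bar y_1-\bar y_0=O_p(n^{-1/2})$, which is confirmed a posteriori by the next two steps.

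The core of the argument is the analysis of $\bar y_1-\bar y_0$. Decompose $y_i=g(x_i)+\epsilon_i$, where $g(x)=h(\mu+\beta x)=\Ex[Y\mid X=x]$ and $\epsilon_i$ has conditional mean $0$ and conditional variance $\var(Y\mid X=x_i)$. This splits $\bar y_1-\bar y_0$ into a covariate-imbalance term built from $g$ and a noise term built from $\epsilon$. Since $g$ is constant on each stratum $s$ (the covariates being discrete), the imbalance term equals $\sum_s g_s(n_1(s)/n_1-n_0(s)/n_0)$, and a short calculation rewrites each bracket in terms of the within-stratum imbalance $D_n(s)$ and the overall imbalance $D_n=\sum_s D_n(s)$, divided by $n_1 n_0\asymp n^2$. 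Here the bounded-within-stratum-imbalance hypothesis is decisive: it forces every $D_n(s)=O_p(1)$, hence $D_n=O_p(1)$, so the whole imbalance term is $O_p(n^{-1})=o_p(n^{-1/2})$ and drops out. This is precisely the step that distinguishes bounded-imbalance designs from complete randomization, under which this term would instead contribute at order $n^{-1/2}$ and add a between-stratum variance component to the limit.

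It remains to handle the noise term and the denominator. Conditional on the assignment vector and the covariates, the noise term is a weighted sum of independent, mean-zero variables; verifying a Lindeberg condition and computing its conditional variance (using $n_1/n,n_0/n\to 1/2$, itself a consequence of $D_n=O_p(1)$) gives $\sqrt n(\bar y_1-\bar y_0)\Dcon N(0,4\Ex[\var(Y\mid X)])$, so that $\sqrt n\,\hat\delta\Dcon N\big(0,4\Ex[\var(Y\mid X)]/[h'(h^{-1}(\Ex[Y]))]^2\big)$. For the denominator, the canonical link makes the Fisher information explicit; inverting the $2\times2$ information matrix associated with \eqref{eq:log-likelihood} gives $\widehat{\se}(\hat\delta)^2=\phi\{[n_1 h'(\hat\mu+\hat\delta)]^{-1}+[n_0 h'(\hat\mu)]^{-1}\}$, and since $\hat\mu$ and $\hat\mu+\hat\delta$ both converge to $h^{-1}(\Ex[Y])$ we obtain $n\,\widehat{\se}(\hat\delta)^2\Pcon 4\phi/h'(h^{-1}(\Ex[Y]))$. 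Dividing the two limits by Slutsky's theorem produces the asserted variance $\Ex[\var(Y\mid X)]/(\phi\,h'(h^{-1}(\Ex[Y])))$.

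The main obstacle is the central limit theorem for the noise term. Because covariate-adaptive randomization makes the assignments $T_1,\dots,T_n$ dependent, no off-the-shelf CLT applies to the unconditional sum. The resolution is to condition on the design and covariates so that the summands become independent, apply a Lindeberg--Feller CLT with a conditional variance that converges to a nonrandom constant, and then argue that this constant is the same for (almost) every realization of the design, so the conditional limit is also the unconditional one. Controlling the interplay between the random stratum sizes $n(s)$ and the dependent assignments is the technical heart of the proof.
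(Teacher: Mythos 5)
Your proposal is correct and follows essentially the same route as the paper's Appendix argument: the same decomposition of $y_i$ into its conditional mean $h(\mu+\beta x_i)$ plus noise, the same use of bounded within-stratum imbalances to make the covariate-imbalance term $o_P(n^{-1/2})$ (the paper's $\sigma_h^2=0$ specialization of Condition (B)), the same conditional CLT given $(\mathcal{X},\mathcal{T})$, the same Taylor expansion of $h^{-1}$ about $\Ex[Y]$, and the same information-matrix limit $n\,\widehat{\se}(\hat\delta)^2\Pcon 4\phi/h'h^{-1}(\Ex[Y])$, combined by Slutsky. The only cosmetic difference is that you work with $\bar y_1-\bar y_0$ and explicit stratum counts where the paper works with $n^{-1}\sum(2T_i-1)y_i$.
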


Based on Theorem \ref{thm:TestWithinStratumBounded}, if the within-stratum 
imbalances are bounded in probability, the size of the test for treatment effect depends on a comparison between $\Ex[\var(Y|X)]$ and $\phi 
h^{'}h^{-1}(\Ex[Y])$, i.e., in the asymptotic sense,
\begin{enumerate}
	\item[(i)] the test is conservative if 
	$\Ex[\var(Y|X)]<\phi h^{'}h^{-1}(\Ex[Y])$,
	\item[(ii)]the test is valid if 
	$\Ex[\var(Y|X)]=\phi h^{'}h^{-1}(\Ex[Y])$,
	\item[(iii)]the test is anti-conservative if 
	$\Ex[\var(Y|X)]>\phi h^{'}h^{-1}(\Ex[Y])$.
\end{enumerate}
Several commonly used generalized linear models are discussed in detail in Section 3.

\begin{remark}
	The condition that within-stratum imbalances are bounded in probability is satisfied by many covariate-adaptive randomization methods based on 
	stratification, such as stratified permuted block design.
	However, it is important to note that the results for Theorem \ref{thm:TestWithinStratumBounded} have applications beyond stratification.
	In particular, Hu and Hu \cite{Hu2012} proposed a new family of covariate-adaptive methods that can simultaneously reduce imbalances of all levels (overall, marginal, and within-stratum).
	The within-stratum imbalances are bounded in probability, so the results given in Theorem \ref{thm:TestWithinStratumBounded} can also be applied to their designs.
\end{remark}

Although the condition in Theorem \ref{thm:TestWithinStratumBounded} is 
satisfied by many stratified randomization methods, it does not hold for some other covariate-adaptive designs, such as Pocock and Simon's minimization.
We next study the test for treatment effect under a broader class of 
covariate-adaptive designs. 
For this, we first state two conditions below.
\bigskip

\noindent\textbf{\underline{Condition} (A)} The overall imbalance is bounded in probability, that is, 
$D_n=O_P(1)$.


\noindent\textbf{\underline{Condition} (B)} Under the null hypothesis, $H_0: \delta=0$, as $n 
\rightarrow \infty$,
\begin{align}
\frac{\sum_{i=1}^{n}{\left(2T_i-1\right)h(\mu+\beta x_i)}}{\sqrt{n}}
\Dcon N(0, \sigma_h^2).
\end{align}

These conditions are mild and satisfied by many covariate-adaptive 
randomization methods that balance discrete covariates.
For covariate-adaptive designs with within-stratum imbalances that are bounded 
in probability, the distribution in Condition (B) reduces to $O_P(1/\sqrt{n})$, 
and hence $\sigma_h^2=0$, by 
noting that  $h(\mu+\beta x_i)$ only takes a finite number of values that 
correspond to each stratum.
Furthermore, the conditions hold for Pocock and Simon's minimization 
according to Hu and Zhang \cite{Hu2020}, who proved that within-stratum 
imbalances, scaled by $n^{-1/2}$, are asymptotically jointly normal, with positive variances.

We now present the general results of the test for treatment effect if these 
conditions are satisfied under a covariate-adaptive design.

\begin{theorem}\label{thm:TestGeneral}
	Suppose that a covariate-adaptive design satisfies Conditions (A)-(B); then 
	under the null 
	hypothesis $H_0: \delta=0$, we have, as $n \rightarrow \infty$,
	\begin{align}
	S \Dcon N \left(0, \frac{\Ex[\var(Y|X)]+\sigma_h^2}{\phi 
		h^{'}h^{-1}(\Ex[Y])}\right).
	\end{align}
\end{theorem}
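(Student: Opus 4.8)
The plan is to write the Wald statistic as a ratio, $S = \sqrt{n}\hat\delta/(\sqrt{n}\,\widehat{\se}(\hat\delta))$, find the limiting law of the numerator and the probability limit of the denominator separately, and combine them by Slutsky's theorem. Writing $\bar y_1 = n_1^{-1}\sum_i T_i y_i$ and $\bar y_0 = n_0^{-1}\sum_i (1-T_i) y_i$, equation \eqref{eq:deltahat} gives $\hat\delta = h^{-1}(\bar y_1) - h^{-1}(\bar y_0)$. Under $H_0$ both group means converge in probability to $\Ex[Y]$, so a first-order Taylor expansion of $h^{-1}$ about $\Ex[Y]$ yields $\hat\delta = [h'(h^{-1}(\Ex[Y]))]^{-1}(\bar y_1 - \bar y_0)(1 + o_P(1))$. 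Everything then rests on the asymptotic distribution of $\sqrt{n}(\bar y_1 - \bar y_0)$.

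First I would write the response under $H_0$ as $y_i = h(\mu + \beta x_i) + \epsilon_i$, where the mean no longer depends on $T_i$ and $\Ex[\epsilon_i|x_i] = 0$, $\var(\epsilon_i|x_i) = \var(Y|X=x_i)$. Using $n_1 - n_0 = D_n = O_P(1)$ from Condition (A) together with $n_1 + n_0 = n$, elementary algebra shows that $\bar y_1 - \bar y_0 = 2n^{-1}\sum_i (2T_i-1) y_i + R_n$, where the remainder $R_n$, which collects the $D_n$-dependent corrections, is of strictly smaller order after scaling by $\sqrt{n}$. Splitting the leading term into a signal part $2n^{-1}\sum_i(2T_i-1)h(\mu+\beta x_i)$ and a noise part $2n^{-1}\sum_i(2T_i-1)\epsilon_i$, Condition (B) gives at once that $\sqrt{n}$ times the signal part converges to $N(0,4\sigma_h^2)$.

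The core difficulty, and the main obstacle, is the noise part: since the $T_i$ are produced by a covariate-adaptive rule they are neither mutually independent nor independent of the covariates, so no classical central limit theorem applies directly to $\sum_i (2T_i-1)\epsilon_i$. I would handle this by conditioning on $\mathcal{F} = \sigma(T_i, x_i : 1 \le i \le n)$. Given $\mathcal{F}$ the responses are independent, so the $\epsilon_i$ are conditionally independent with mean zero and variances $\var(Y|X=x_i)$; hence the conditional variance of $2n^{-1/2}\sum_i(2T_i-1)\epsilon_i$ equals $4n^{-1}\sum_i \var(Y|X=x_i)$, which tends to $4\Ex[\var(Y|X)]$ by the law of large numbers. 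A conditional Lindeberg--Feller argument (valid under mild moment assumptions on the exponential family) then gives conditional convergence to $N(0,4\Ex[\var(Y|X)])$; because this limit is non-random, passing to conditional characteristic functions shows that the noise part converges unconditionally and is asymptotically independent of every $\mathcal{F}$-measurable quantity. As the signal part is $\mathcal{F}$-measurable, the two limits add, giving $\sqrt{n}(\bar y_1 - \bar y_0) \Dcon N(0, 4(\Ex[\var(Y|X)] + \sigma_h^2))$ and therefore $\sqrt{n}\hat\delta \Dcon N(0, 4(\Ex[\var(Y|X)]+\sigma_h^2)/[h'(h^{-1}(\Ex[Y]))]^2)$.

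It remains to evaluate the model-based standard error. Differentiating the log-likelihood \eqref{eq:log-likelihood} twice under the canonical link gives a Fisher information whose inverse has $(2,2)$ entry $\phi(s_0+s_1)/(s_0 s_1)$, with $s_1 = \sum_i T_i h'(\hat\eta_i)$ and $s_0 = \sum_i(1-T_i)h'(\hat\eta_i)$. Since $\hat\delta \Pcon 0$ and $\hat\mu \Pcon h^{-1}(\Ex[Y])$, each $h'(\hat\eta_i) \Pcon h'(h^{-1}(\Ex[Y]))$, and with $n_0, n_1 \sim n/2$ from Condition (A) one obtains $n\,\widehat{\se}(\hat\delta)^2 \Pcon 4\phi/h'(h^{-1}(\Ex[Y]))$. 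Dividing the numerator limit by this denominator limit via Slutsky's theorem cancels the factor $4$ and one power of $h'(h^{-1}(\Ex[Y]))$, leaving $S \Dcon N(0, (\Ex[\var(Y|X)]+\sigma_h^2)/[\phi\, h'(h^{-1}(\Ex[Y]))])$, as asserted. Note finally that Theorem \ref{thm:TestWithinStratumBounded} is the special case $\sigma_h^2 = 0$, so the same argument recovers it.
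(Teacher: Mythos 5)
Your proposal is correct and follows essentially the same route as the paper's own argument: the same decomposition of $(2T_i-1)y_i$ into the covariate ``signal'' term handled by Condition (B) and the ``noise'' term $(2T_i-1)(y_i-\nu_i)$ handled by a conditional CLT given $(\mathcal{X},\mathcal{T})$ with asymptotic independence of the two pieces (the paper's Lemma~\ref{lem:lem1}), the same Taylor linearization of $h^{-1}$ about $\Ex[Y]$ using Condition (A) to replace $n_1,n_0$ by $n/2$, and the same Fisher-information computation for the denominator before applying Slutsky. The only cosmetic difference is that the paper proves the non-canonical-link version (Theorem~\ref{thm:TestGeneral-NC}) and specializes, whereas you work directly with the canonical link.
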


It is clear to see that the result of Theorem 
\ref{thm:TestWithinStratumBounded} is a special case of Theorem 
\ref{thm:TestGeneral} with $\sigma_h^2=0$.
Compared with stratified randomization, an extra layer of 
variation induced by within-stratum imbalances must be considered
when studying the size of the test for treatment effect under general 
covariate-adaptive designs, such as Pocock and Simon's minimization.
As in the aforementioned discussion, the numerator and denominator of the 
asymptotic variance in Theorem \ref{thm:TestGeneral} must be compared to 
determine whether the test is asymptotically conservative, valid, or 
anti-conservative. 

\begin{remark}
	Under Pocock and Simon's minimization, the within-stratum imbalances increase at the rate of $O_P(\sqrt{n})$, leading to $\sigma_h^2>0$, so the asymptotic variance of test statistic \eqref{eq:test_stat} under minimization is larger than that under stratified randomization, which results in a larger size compared with stratified randomization.
\end{remark} 

\begin{remark}
	Condition (A) is not satisfied by complete randomization.
	However, the independence of treatment assignment and covariates shows that the test statistic $S$  also has a normal distribution under the null hypothesis, 
	\begin{align}\label{eq:cr}
	S \Dcon N \left(0, \frac{\var[Y]}{\phi 
		h^{'}h^{-1}(\Ex[Y])}\right).
	\end{align}
	As $\var[Y]=\Ex[\var(Y|X)]+\var[\Ex(Y|X)]$, the asymptotic variance of $S$ under complete randomization is generally greater than that under stratified randomization.
\end{remark}

\section{Test Size under Some Commonly Used 
	Models}\label{sec:TestingCommonModels}

In this section, we study the impact of using working model 
\eqref{eq:working_model} on the size of the test for treatment effect based on 
Theorems \ref{thm:TestWithinStratumBounded} and \ref{thm:TestGeneral}.
Several commonly used models with canonical link functions are covered.

\subsection{Logistic Regression}\label{sec:TestingCommonModels-logistic}
Logistic regression is extensively used in clinical trials with binary 
responses.
Under logistic regression, the conditional distribution of $Y$ given $\eta$ is 
a Bernoulli distribution ($\phi=1$), and the canonical link takes the form of 
$h(\eta)=\exp(\eta)[1+\exp(\eta)]^{-1}$.
Also, it is easy to verify that
\begin{align*}
\phi h'h^{-1}(\Ex[Y])=\var(Y).
\end{align*}
Therefore, provided that the condition in Theorem 
\ref{thm:TestWithinStratumBounded} is satisfied, i.e., the within-stratum 
imbalances are bounded in probability, under $H_0: \delta=0$,
\begin{align*}
S \Dcon N \left(0, \frac{\Ex[\var(Y|X)]}{\var(Y)}\right).
\end{align*}
Because $\Ex[\var(Y|X)]$ is smaller than ${\var(Y)}$ in general, the asymptotic 
distribution of $S$ shows a greater concentration around 0 than the standard normal 
distribution, resulting in a conservative test whose test size is smaller than the 
nominal level. 
The conclusion holds for the stratified permuted block design and the class of 
covariate-adaptive designs proposed by Hu and Hu \cite{Hu2012}.

The size under minimization is larger than that under stratified randomization 
because of a positive $\sigma_h^2$, and the simulation studies in Section 
\ref{sec:Simulation} show that the test under minimization is also 
conservative. 
Under complete randomization, the variance of the asymptotic distribution of 
$S$ is equal to 1 according to \eqref{eq:cr}, leading to a valid test.

\subsection{Poisson Regression}\label{sec:TestingCommonModels-poisson}
Poisson regression can be used to model event counts whose conditional 
distribution given $\eta$ is a Poisson distribution ($\phi=1$).
Under the canonical link $h(\eta)=\exp(\eta)$ and $H_0: \delta=0$, we have
\begin{align*}
\phi h'h^{-1}(\Ex[Y])=\Ex[Y]=\Ex[\var(Y|X)],
\end{align*}
as $\Ex[Y|X]=\var[Y|X]$. 
By Theorem \ref{thm:TestWithinStratumBounded} it is easy to see that the test for 
treatment effect is valid if within-stratum 
imbalances are bounded in probability.
In contrast, both minimization and complete randomization have inflated Type I 
errors, resulting in anti-conservative tests.

\subsection{Linear Regression}
Under linear regression, a response is a continuous variable whose 
conditional distribution given $\eta$ is a normal distribution with mean $\eta$ 
and variance $\phi=\sigma^2$.
The canonical link is the identity function $h(\eta)=\eta$.
Assuming that $\phi=\sigma^2$ is known, we have,
\begin{align*}
\phi h'h^{-1}(\Ex[Y])=\sigma^2=\Ex[\var(Y|X)],
\end{align*}
so the test is valid under stratified randomization.
However, the test is anti-conservative for minimization by Theorem 
\ref{thm:TestGeneral}.

In practice,  $\phi=\sigma^2$ is usually unknown and must be estimated based 
on the data.
Such hypothesis testing in the linear model framework has been theoretically 
studied in the literature \citep{Shao2010, Ma2015}.
It is proved that if influential randomization covariates are omitted from the 
working model, the test for treatment effect is conservative for a large class 
of covariate-adaptive designs, including both stratified randomization and 
minimization.

\subsection{Exponential Model}\label{sec:TestingCommonModels-exponential}
In the exponential model, the response variable $Y$ given $\eta$ is 
exponentially distributed ($\phi=1$).
Under the canonical link $h(\eta)=-1/\eta$ and $H_0: \delta=0$, we have
\begin{align*}
\phi h'h^{-1}(\Ex[Y])=(\Ex[Y])^2=(\Ex\{\Ex[Y|X]\})^2.
\end{align*}
Notice that $\Ex[\var(Y|X)]=\Ex\{[\Ex(Y|X)]^2\}$; then by the Cauchy-Schwarz 
inequality, 
\begin{align*}
\phi h'h^{-1}(\Ex[Y])\le\Ex[\var(Y|X)],
\end{align*}
so the test for treatment effect is typically anti-conservative for stratified 
randomization with inflated Type I errors by Theorem 
\ref{thm:TestWithinStratumBounded}.
Furthermore, the extent of inflated Type I errors is even more severe under 
minimization and complete randomization by Theorems \ref{thm:TestGeneral} and 
\eqref{eq:cr}.

\section{General Results with Non-Canonical Link}
\label{sec:TestingGLM-NC}

In this section, the results in Section 
\ref{sec:TestingGLM} obtained under canonical link functions are extended to cases of non-canonical link functions.

We first note that the MLE estimates of $\mu$ and 
$\delta$ based on the working model \eqref{eq:working_model} with a 
non-canonical link take the same form as those with a canonical link.
Let $\theta=\gamma(\eta)$, which is not necessarily the identity function under 
general link functions.
Gail et al. \cite{Gail1984} showed that if $\gamma'(\hat{\mu})$ and 
$\gamma'(\hat{\mu}+\hat{\delta})$ are not degenerate, 
the MLE estimates $\hat{\mu}$ and $\hat{\delta}$ under a non-canonical link 
are the same as \eqref{eq:muhat} and \eqref{eq:deltahat}.

\begin{remark}
	In fact, the estimates $\hat{\mu}$ and $\hat{\delta}$ under a canonical link can 
	be considered as the estimates by the method of moments.
	It was shown that the MLE estimates under the working model 
	\eqref{eq:working_model} are equal to the moment estimates if the 
	non-degenerate conditions are satisfied \citep{Gail1984} .
\end{remark} 

Next, we state two general theorems on the test for treatment effect when a
non-canonical link function is used. These theorems correspond to Theorems 
\ref{thm:TestWithinStratumBounded} and \ref{thm:TestGeneral}, respectively.

\begin{theorem}\label{thm:TestWithinStratumBounded-NC}
	Suppose that a covariate-adaptive design satisfies the condition that all 
	within-stratum imbalances are bounded in probability; then under the null 
	hypothesis $H_0: \delta=0$, we have, as $n \rightarrow \infty$,
	\begin{align}
	S \Dcon N \left(0, \frac{\Ex[\var(Y|X)]}{\phi 
		h^{'}h^{-1}(\Ex[Y])/\gamma^{'}h^{-1}(\Ex[Y])}\right).
	\end{align}
\end{theorem}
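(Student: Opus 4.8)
The plan is to write $S=\hat\delta/\widehat{\se}(\hat\delta)$ and to treat the numerator and the model-based standard error in the denominator separately, combining them at the end by Slutsky's theorem. The key structural observation is that, by the result of Gail et al. \cite{Gail1984} recalled above, the MLE $\hat\delta$ under a non-canonical link has exactly the same closed form \eqref{eq:deltahat} as under the canonical link. Hence the numerator of $S$ has the identical asymptotic behaviour to that established in the proof of Theorem \ref{thm:TestWithinStratumBounded}, and the link function enters only through the information used to form $\widehat{\se}(\hat\delta)$. I would therefore import the numerator analysis unchanged and devote the real work to recomputing the denominator.

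For the numerator, writing $\bar Y_1=n_1^{-1}\sum_i T_iy_i$ and $\bar Y_0=n_0^{-1}\sum_i(1-T_i)y_i$, a delta-method expansion of $\hat\delta=h^{-1}(\bar Y_1)-h^{-1}(\bar Y_0)$ about $\Ex[Y]$ gives
\begin{align*}
\sqrt{n}\,\hat\delta=\frac{\sqrt{n}(\bar Y_1-\bar Y_0)}{h'h^{-1}(\Ex[Y])}+o_P(1).
\end{align*}
Under $H_0$ both arms share the common mean $\Ex[Y]$, and the hypothesis that all within-stratum imbalances are bounded in probability makes the between-stratum contribution to $\bar Y_1-\bar Y_0$ vanish in the limit, leaving $\sqrt{n}(\bar Y_1-\bar Y_0)\Dcon N(0,4\,\Ex[\var(Y|X)])$ exactly as in Theorem \ref{thm:TestWithinStratumBounded}. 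This yields
\begin{align*}
\sqrt{n}\,\hat\delta\Dcon N\!\left(0,\ \frac{4\,\Ex[\var(Y|X)]}{[h'h^{-1}(\Ex[Y])]^2}\right),
\end{align*}
a variance that does not depend on $\gamma$.

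The substance of the proof is the denominator. With $\theta_i=\gamma(\eta_i)$, the per-observation score in $\eta$ carries one extra chain-rule factor, $\partial\ell_i/\partial\eta_i=\phi^{-1}(y_i-h(\eta_i))\gamma'(\eta_i)$, and using the exponential-family identity $h'(\eta)=b''(\theta)\gamma'(\eta)$ the expected per-observation information for $\eta$ becomes
\begin{align*}
\mathcal I(\eta)=\frac{[\gamma'(\eta)]^2\,b''(\theta)}{\phi}=\frac{\gamma'(\eta)\,h'(\eta)}{\phi}.
\end{align*}
Since $\eta=\mu+\delta T$ and $T_i^2=T_i$, the model-based information matrix for $(\mu,\delta)$ is
\begin{align*}
\sum_{i=1}^{n}\mathcal I(\eta_i)\begin{pmatrix}1 & T_i\\ T_i & T_i\end{pmatrix}.
\end{align*}
Under $H_0$ the fitted linear predictors $\hat\mu$ and $\hat\mu+\hat\delta$ are both consistent for $h^{-1}(\Ex[Y])$, so $\mathcal I(\eta_i)$ may be replaced by the common limit $\mathcal I=\gamma'h^{-1}(\Ex[Y])\,h'h^{-1}(\Ex[Y])/\phi$; inverting the matrix, extracting the $(\delta,\delta)$ entry, and using $n_1/n\to 1/2$ gives $\widehat{\var}(\hat\delta)=4/(n\mathcal I)+o_P(n^{-1})$.

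Combining the two pieces, $S^2=(\sqrt n\,\hat\delta)^2\,\mathcal I/4+o_P(1)$, so by Slutsky's theorem
\begin{align*}
S\Dcon N\!\left(0,\ \frac{\mathcal I\cdot 4\,\Ex[\var(Y|X)]}{4\,[h'h^{-1}(\Ex[Y])]^2}\right)
=N\!\left(0,\ \frac{\gamma'h^{-1}(\Ex[Y])\,\Ex[\var(Y|X)]}{\phi\,h'h^{-1}(\Ex[Y])}\right),
\end{align*}
which is the claimed variance once the denominator is written as $\phi h'h^{-1}(\Ex[Y])/\gamma'h^{-1}(\Ex[Y])$; setting $\gamma'\equiv1$ recovers Theorem \ref{thm:TestWithinStratumBounded}. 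I expect the denominator to be the main obstacle: one must confirm that it is the working-model (model-based) information, not any empirical variance, that governs $\widehat{\se}(\hat\delta)$, track the single extra factor $\gamma'$ through the chain rule without perturbing the numerator, and justify replacing $\mathcal I(\eta_i)$ by its common null value $\mathcal I$ via the consistency of $\hat\mu$ and $\hat\mu+\hat\delta$ under $H_0$.
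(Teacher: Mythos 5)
Your proposal is correct and follows essentially the same route as the paper's proof (which is written once for Theorem \ref{thm:TestGeneral-NC} and specialized here to $\sigma_h^2=0$): linearize the closed-form $\hat\delta=h^{-1}(\bar Y_1)-h^{-1}(\bar Y_0)$ around $\Ex[Y]$ to reduce the numerator to $2\sum(2T_i-1)y_i/(n\,h'h^{-1}(\Ex[Y]))$, handle the covariate part of $y_i$ via the bounded within-stratum imbalances, and obtain $\sqrt{n}\,\widehat{\se}(\hat\delta)\Pcon\{4\phi/(h'h^{-1}(\Ex[Y])\gamma'h^{-1}(\Ex[Y]))\}^{1/2}$ from the model-based information matrix with the extra $\gamma'$ chain-rule factor. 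Your identification that the link enters only through the information matrix while the numerator is unchanged is exactly the structure of the paper's argument, and your variance computation matches \eqref{eq:appx-numerator}--\eqref{eq:appx-denominator}.
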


\begin{theorem}\label{thm:TestGeneral-NC}
	Suppose that a covariate-adaptive design satisfies Conditions (A)-(B) {\color{black} listed above Theorem \ref{thm:TestGeneral} in Section \ref{sec:TestingGLM-results}}; then 
	under the null 
	hypothesis $H_0: \delta=0$, we have, as $n \rightarrow \infty$,
	\begin{align}
	S \Dcon N \left(0, \frac{\Ex[\var(Y|X)]+\sigma_h^2}{\phi 
		h^{'}h^{-1}(\Ex[Y])/\gamma^{'}h^{-1}(\Ex[Y])}\right).
	\end{align}
\end{theorem}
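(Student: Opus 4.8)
The plan is to treat $S=\hat\delta/\widehat{\se}(\hat\delta)$ as a ratio, writing it as $S=\sqrt{n}\,\hat\delta/\sqrt{n\,\widehat{\se}(\hat\delta)^2}$, and to identify the limits of the numerator and denominator separately before combining them by Slutsky's theorem. The crucial observation is that, by the remark attributed to Gail et al. \cite{Gail1984}, the MLEs $\hat\mu$ and $\hat\delta$ under the non-canonical link coincide with the moment estimates \eqref{eq:muhat}--\eqref{eq:deltahat}, exactly as in the canonical case. Consequently the numerator $\sqrt{n}\,\hat\delta$ is the very statistic analyzed in the proof of Theorem \ref{thm:TestGeneral}, and its asymptotic law does not depend on the link; the entire effect of the non-canonical link is confined to the model-based standard error in the denominator.

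For the numerator I would reuse the argument behind Theorem \ref{thm:TestGeneral}. Writing $\bar Y_1=n_1^{-1}\sum_i T_iy_i$ and $\bar Y_0=n_0^{-1}\sum_i(1-T_i)y_i$, a delta-method linearization gives $\hat\delta=h^{-1}(\bar Y_1)-h^{-1}(\bar Y_0)=(\bar Y_1-\bar Y_0)/h'(h^{-1}(\Ex[Y]))+o_P(n^{-1/2})$ under $H_0$, since both group means converge to $\Ex[Y]$. Decomposing $y_i=h(\mu+\beta x_i)+\epsilon_i$ with $\Ex[\epsilon_i\mid x_i]=0$, and using Condition (A) to replace $n_1$ and $n_0$ by $n/2$ up to $O_P(1)$ terms, one obtains $\sqrt n(\bar Y_1-\bar Y_0)=2n^{-1/2}\sum_i(2T_i-1)h(\mu+\beta x_i)+2n^{-1/2}\sum_i(2T_i-1)\epsilon_i+o_P(1)$. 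Condition (B) makes $n^{-1/2}\sum_i(2T_i-1)h(\mu+\beta x_i)$ converge to $N(0,\sigma_h^2)$, so the first term tends to $N(0,4\sigma_h^2)$; the second term, conditional on the assignments and covariates, is a sum of independent mean-zero variables with total variance $4n^{-1}\sum_i\var(Y_i\mid x_i)\to 4\,\Ex[\var(Y\mid X)]$, and the two pieces are asymptotically independent because the noise part has conditional mean zero. This yields $\sqrt n\,\hat\delta\Dcon N\big(0,\,4(\Ex[\var(Y\mid X)]+\sigma_h^2)/[h'(h^{-1}(\Ex[Y]))]^2\big)$.

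The new content is the denominator. The working-model Fisher information for $(\mu,\delta)$ is $I=Z^\top\mathrm{diag}(w_i)\,Z$, with $Z=[\mathbf 1,\,T]$ and weights $w_i=[h'(\hat\eta_i)]^2/\{\phi\,b''(\gamma(\hat\eta_i))\}$, $\hat\eta_i=\hat\mu+\hat\delta T_i$ (the denominator being $\phi$ times the working-model conditional variance of $Y_i$). Because the working model carries no covariates, $w_i$ takes only two values, one per treatment arm, so a direct $2\times2$ inversion gives $\widehat{\se}(\hat\delta)^2=[I^{-1}]_{\delta\delta}=W_1^{-1}+W_0^{-1}$ with $W_1=\sum_{T_i=1}w_i$ and $W_0=\sum_{T_i=0}w_i$. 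Under $H_0$ we have $\hat\delta\Pcon0$ and $\hat\mu\Pcon h^{-1}(\Ex[Y])$, so every weight converges to the common limit $[h'(h^{-1}(\Ex[Y]))]^2/\{\phi\,b''(\gamma(h^{-1}(\Ex[Y])))\}$; together with $n_1,n_0\to n/2$ from Condition (A) this gives $n\,\widehat{\se}(\hat\delta)^2\Pcon 4\phi\,b''(\gamma(h^{-1}(\Ex[Y])))/[h'(h^{-1}(\Ex[Y]))]^2$. Slutsky's theorem then produces the limiting variance $(\Ex[\var(Y\mid X)]+\sigma_h^2)/\{\phi\,b''(\gamma(h^{-1}(\Ex[Y])))\}$, and differentiating $h=b'\circ\gamma$ gives $h'(\eta)=b''(\gamma(\eta))\,\gamma'(\eta)$, so that $\phi\,b''(\gamma(h^{-1}(\Ex[Y])))=\phi\,h'h^{-1}(\Ex[Y])/\gamma'h^{-1}(\Ex[Y])$, which is precisely the stated denominator.

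The main obstacle is this denominator computation. Under the canonical link $\gamma'\equiv1$ and the factor $b''(\gamma(\eta))$ collapses to $h'(\eta)$, recovering Theorem \ref{thm:TestGeneral}; under a non-canonical link one must keep track of $\gamma'$ and verify the chain-rule identity $h'=(b''\circ\gamma)\,\gamma'$ that converts $b''(\gamma(h^{-1}(\Ex[Y])))$ into the ratio $h'h^{-1}(\Ex[Y])/\gamma'h^{-1}(\Ex[Y])$. One must also confirm that the plug-in weights converge to their deterministic limit under $H_0$ and Conditions (A)--(B), which follows from the consistency of $\hat\mu$ and $\hat\delta$ and the continuity of $h'$, $h^{-1}$, $b''$, and $\gamma$. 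The numerator requires no fresh argument, being inherited from the proof of Theorem \ref{thm:TestGeneral} since the MLEs are link-invariant.
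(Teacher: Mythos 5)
Your proposal is correct and follows essentially the same route as the paper: linearize $\hat\delta=h^{-1}(\bar Y_1)-h^{-1}(\bar Y_0)$ around $\Ex[Y]$, split $\sum(2T_i-1)y_i$ into a conditional-CLT noise term and the Condition (B) term (the paper's Lemma 1), and compute the probability limit of $n\,\widehat{\se}(\hat\delta)^2$ from the working-model Fisher information, whose weights $[h'(\eta)]^2/\{\phi\,b''(\gamma(\eta))\}$ coincide with the paper's $h'(\eta)\gamma'(\eta)/\phi$ by the chain-rule identity you verify. The only point stated more loosely than in the paper is the asymptotic independence of the two pieces, which the paper justifies by noting that the conditional limit law of the noise term given covariates and assignments is a fixed normal, followed by dominated convergence; conditional mean zero alone is not quite the reason.
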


It is clear that the two theorems above reduce to Theorems 
\ref{thm:TestWithinStratumBounded} and \ref{thm:TestGeneral} by setting 
$\gamma(\eta)=\eta$.
Based on the fact that $\Ex[\var(Y|X)]=\Ex[\phi h^{'}(\mu+\beta 
X)/\gamma^{'}(\mu+\beta 
X)]$ and $\Ex[Y]=\Ex[h(\mu+\beta X)]$, one can use these general theorems 
to evaluate the test size for various generalized linear models under 
covariate-adaptive randomization with an argument similar to that in Section 
\ref{sec:TestingCommonModels}.

\section{Adjusted Test}
\label{sec:Adjustment}

As discussed in Sections \ref{sec:TestingGLM} to \ref{sec:TestingGLM-NC}, the 
Wald test under the working model \eqref{eq:working_model} is invalid, either 
conservative or anti-conservative, for many generalized linear models.
In this section, we show how to make an adjustment to achieve a valid test 
based on the theoretical results obtained.
{\color{black} The purpose of the proposed adjusted test is twofold. First, it provides a remedy approach when using generalized linear models that do not adjust for the randomization covariates. However, we note that a simpler and more practical alternative is to adjust for the covariates within the analysis. Second, the adjusted test's validity, later confirmed by simulation studies, provides an added layer of justification for the theoretical results.}

We first consider the scenario in which within-stratum imbalances are bounded in 
probability, such as stratified permuted block design and the class of 
covariate-adaptive designs proposed by Hu and Hu \cite{Hu2012}.
For simplicity of notation, let $\sigma_\nu^2=\Ex[\var(Y|X)]$.
{\color{black} We note that in the proof of Theorem \ref{thm:TestGeneral-NC} in the Appendix, it is shown that, }
under model \eqref{eq:working_model} and 
$H_0: \delta=0$, we have
\begin{align}\label{eq:dist_adj_str}
\frac{h^{'}h^{-1}(\Ex[Y])\hat{\delta}}{2\sqrt{\sigma_\nu^2/n}}
\Dcon N(0,1).
\end{align}
An adjusted test statistic can then be constructed by replacing the 
population parameters by their sample estimates,
\begin{align}\label{eq:test_stat_adj_str}
S_{adj}=\frac{h^{'}h^{-1}(\bar Y_n)\hat{\delta}}{2\sqrt{\hat\sigma_\nu^2/n}},
\end{align}
where $\bar Y_n=\sum_{i=1}^{n}y_i/n$ is the overall sample mean. 
$\hat\sigma_\nu^2=\sum_{i=1}^{n}\sum_{j=1}^{m} 
(y_i-\hat\mu_j)^2I\{x_i=s_j\}/(n-m)$ is the sample estimate of $\sigma_\nu^2$,
where $s_j$ represents the $j$th stratum formed by covariates, and 
$\hat\mu_j$ is the sample mean of $Y_i$ in the $j$th stratum.
Thus $S_{adj}$ follows a standard normal distribution asymptotically under the 
null hypothesis, leading to a valid test.

Furthermore, for a general covariate-adaptive randomization that satisfies 
Conditions (A) and (B), a conclusion similar to \eqref{eq:dist_adj_str} can 
also be established based on the proof {\color{black} of Theorem \ref{thm:TestGeneral-NC}} in the Appendix, i.e., under model 
\eqref{eq:working_model} and $H_0: \delta=0$,
\begin{align}\label{eq:dist_adj_general}
\frac{h^{'}h^{-1}(\Ex[Y])\hat{\delta}}{2\sqrt{(\sigma_\nu^2+\sigma_h^2)/n}}
\Dcon N(0,1).
\end{align}
Similarly to \eqref{eq:test_stat_adj_str}, a valid test can be constructed with 
an adjusted test statistic
\begin{align}\label{eq:test_stat_adj_general}
S_{adj}^{*}=\frac{h^{'}h^{-1}(\bar 
	Y_n)\hat{\delta}}{2\sqrt{(\hat\sigma_\nu^2+\hat\sigma_h^2)/n}},
\end{align}
as long as $\hat\sigma_h^2$ is a consistent estimator of $\sigma_h^2$. 
Compared with \eqref{eq:test_stat_adj_str}, the only additional quantity we need 
to estimate is $\sigma_h^2$.
Note that $
\sum_{i=1}^{n}{\left(2T_i-1\right)h(\mu+\beta x_i)}
=\sum_{j=1}^{m}D_n(s_j)h(\mu+\beta s_j)
$
is a weighted sum of within-stratum imbalances $D_n(s_j)$.
If the joint distribution of within-stratum imbalances of 
$\{D_n(s_j),j=1,\ldots,m\}$ is known, we can estimate $\sigma_h^2$ by 
using the joint distribution and replacing $h(\mu+\beta s_j)$ with its sample 
estimate $\hat\mu_j$.
Another option is to evaluate the joint within-stratum imbalances of 
$\{D_n(s_j),j=1,\ldots,m\}$ numerically. We provide further details on the minimization in Section \ref{sec:Simulation-Size-Adj}.

\begin{remark}
	A valid test can also be achieved for complete randomization, with a 
	similar test statistic defined as in \eqref{eq:test_stat_adj_str} and 
	\eqref{eq:test_stat_adj_general},
	\begin{align}\label{eq:dist_adj_cr}
	S_{adj, CR}=\frac{h^{'}h^{-1}(\bar 
		Y_n)\hat{\delta}}{2\sqrt{\hat\sigma_Y^2/n}},
	\end{align}
	where $\hat\sigma_Y^2$ is the overall sample variance.
\end{remark}

After adjustment, we can obtain more powerful hypothesis testing results.
A simulation study is carried out in the next section to evaluate the 
size and power of these adjusted tests under various randomization methods 
and generalized linear models.

\section{Numerical Studies}\label{sec:Simulation}

\subsection{Size of Wald Test}\label{sec:Simulation-Size}

We consider three commonly used generalized linear models: logistic 
regression, Poisson regression, and exponential model.
The asymptotic properties of the Wald test for treatment effect under model 
\eqref{eq:working_model} are discussed in Section \ref{sec:TestingGLM}.
We now perform simulations to evaluate the size under a finite sample.

Assume that the true model is 
\begin{align}\label{eq:20}
\Ex(y_i|T_i,x_i)=h(\mu+\delta T_i+\beta_1 x_{i,1}+\beta_2 x_{i,2})=h(\eta_i),
\end{align}
where $\delta$ represents the treatment effect and 
$x_{i,j}$ are independent and identically distributed random variables that 
follow the Bernoulli distribution with $P=0.5$.
The parameters $(\mu,\beta_1,\beta_2)=(-1,2,4)$ are assumed for 
logistic regression, and    
$(\mu,\beta_1,\beta_2)=(0.2,0.5,1)$ for both Poisson regression and 
exponential model.
Canonical link functions are used for these models.
Note that the inverse function, instead of the negative inverse function, is used 
for the exponential model in the simulation, so the range of $\Ex(y_i|T_i,x_i)$ 
is permitted. 
{\color{black} The error distributions (i.e., the conditional distributions of $y_i$ given $\eta_i$) for these three generalized linear models are respectively specified in Sections \ref{sec:TestingCommonModels-logistic}, \ref{sec:TestingCommonModels-poisson}, and \ref{sec:TestingCommonModels-exponential}.

For each of the three generalized linear models under consideration, we simulate datasets of two sample sizes, $N=200$ and $500$, and four different randomization methods, which are complete randomization, stratified permuted block design, Pocock and Simon's minimization, and the method proposed by Hu and Hu. In each scenario, we first generate the patients’ covariates $x_{i,1}$ and $ x_{i,2}$, $i=1,\ldots,N$, according to the Bernoulli distributions as specified above. Then we use one of the randomization methods to obtain the treatment assignments $T_i$. In simulations, a block size of 4 is used in stratified permuted block design, and a biased coin probability of 0.75 is used in Pocock and Simon's minimization and the method proposed by Hu and Hu. Finally, the treatment responses $y_i$ are simulated according to the true model \eqref{eq:20} and the corresponding error distributions with parameters as described above. }
The following working model is used to test $H_0: \delta=0$,
\begin{align}
\Ex(y_i|T_i)=h(\mu +\delta T_i),
\end{align}
where no covariates are included in the model. 
The significance level for the two-sided tests is $\alpha = 0.05$.
{\color{black} Each scenario is simulated 5,000 times.}

\begin{center} [Tables 1-3 here.] \end{center}

The simulated sizes are presented in Tables 1-3 under the column ``Wald 
Test'' and are consistent with the 
theoretical results.
In particular, the test for treatment effect based on logistic regression is 
generally conservative under covariate-adaptive randomization if covariates are 
omitted from the working model, whereas the test is still valid under complete 
randomization.
For Poisson regression, the omission of covariates leads to a valid test under the
stratified permuted block design and under Hu and Hu's method.
However, the inflation of Type I error is slight under minimization and more 
severe under complete randomization.
The tests are all anti-conservative for the exponential model if no covariates 
are included. 
In all scenarios, the stratified permuted block design and Hu and Hu's method perform comparably in terms of test size, whereas the test size for 
minimization lies between those of these two designs and complete randomization.
{\color{black} Finally, we note that the simulation results under complete randomization are expected (see Section \ref{sec:TestingCommonModels}) and concur with the findings in the literature \cite{Gail1988b, Fan2018}.}

\subsection{Size of Adjusted Test}
\label{sec:Simulation-Size-Adj}

In this section, adjustment is performed to achieve a valid test for treatment 
effect, as discussed in Section \ref{sec:Adjustment}.
Using the same setup as in the previous section, the adjustment follows the 
procedure in Section \ref{sec:Adjustment}.
For minimization, Monte Carlo simulation is used to generate $B(=500)$ samples 
of joint within-stratum imbalances of $\{D_n(s_j),j=1,\ldots,m\}$ via the 
repetitive application of minimization on the covariates; then 
$\hat\sigma_h^2$ can be obtained by calculating the variance of 
$\sum_{j=1}^{m}D_n(s_j)\hat\mu_j$ and can be used to construct the test 
statistic \eqref{eq:test_stat_adj_general}.
The sizes for the adjusted tests are presented in Tables 1-3 under the column 
``Adjusted Test.''
Based on the results, the adjustment works quite well because the 
sizes are successfully controlled at 5\% for all three generalized linear 
models under different randomization methods.

\subsection{Power of Adjusted Test}
\label{sec:Simulation-Power-Adj}

Based on the adjusted tests, we also perform simulations to evaluate the power 
of these tests under various generalized linear models and randomization 
methods.
The same setup is used as in the previous sections, except that $\delta$ is 
assumed to be a sequence of non-zero numbers to assess power.
In particular, $\delta$ is assumed to be $0, 0.2, 0.4, \ldots, 2$ for logistic 
regression and $0, 0.04, 0.08, \ldots, 0.4$ for Poisson regression and the
exponential model, so that power ranges from 5\% to more than 
90\%. The powers are presented in Figure 1. 
It can be seen that the adjusted test has the highest power under the
stratified permuted block design and Hu and Hu's method for all three 
generalized linear models, which is followed by minimization.
The adjusted test under complete randomization is the least powerful.
Like size, the power of minimization lies between the
stratified permuted block design and complete randomization.

\begin{center} [Figure 1 here.] \end{center}

\section{Conclusions}\label{sec:Conclusion}

In this study, we examine the test for treatment effect based on 
generalized linear models under a large family of covariate-adaptive 
randomization.
We derive the asymptotic distribution of the unadjusted Wald test statistic under the null 
hypothesis when no covariates are incorporated into the working model,
and present the conditions under which the test is valid or not.
For the invalid cases, we further propose an adjustment to maintain the test 
validity with increased power.

{\color{black} Although we have focused on hypothesis testing, the findings have profound implications for treatment effect estimation, which is often an important target in randomized clinical trials. In particular, our results suggest that the usual Wald confidence intervals may be improperly specified if the covariates used in covariate-adaptive randomization are omitted from generalized linear models. It would be interesting to investigate the estimation issues, such as bias and constructing valid confidence intervals, for generalized linear models under covariate-adaptive randomization.}

It is assumed that discrete covariates are used in the randomization procedure. 
However, it is common to collect continuous covariates in clinical trials. 
{\color{black} Continuous covariates are typically discretized in order to be included in the randomization scheme. Alternatively, the methods that directly balance continuous covariates are useful and are gaining popularity \citep{Frane1998, Stigsby2010, Su2011, Ma2013, Zhao2015}.}
For inference in the linear model framework, discretization was considered in 
Shao et al. \cite{Shao2010} and Ma et al. \cite{Ma2015}, whereas adaptive randomization with continuous covariates was studied in Qin et al. \cite{Ma2017} and Li et al. \cite{Li2019}.
It is desirable to extend the framework proposed in this paper to study 
the test for treatment effect under generalized linear models when continuous 
covariates are considered.

In addition to the treatment effect, it is of interest to test whether a covariate is 
influential on patient outcome.
In this paper, we focus mainly on the test for treatment effect without the use of any covariates in the working model.
The property of testing covariate effects remains unknown if partial covariate information is used in the working model.
It was shown that testing covariates is valid under certain conditions if a 
linear model is used \citep{Ma2015}, but the properties under generalized 
linear models are not clear.
Finally, it might also be possible to consider the scenario in which the covariates are subject to misclassification \citep{Fan2018, Wang2020}.
These topics are left for future research.

{\color{black}
\begin{acks}
We are grateful to the referees for their many helpful comments.
\end{acks}
}

\begin{funding}
The authors disclosed receipt of the following financial support for the research, authorship, and/or publication of this article: This work was supported by the Fundamental Research Funds for the Central Universities, and the Research Funds of Renmin University of China [grant number 20XNA023].
\end{funding}

\begin{dci}
The authors declare that there is no conflict of interest.
\end{dci}

\bibliographystyle{SageV}
\bibliography{WeiBib_SMMR}

\begin{thebibliography}{10}
\providecommand{\url}[1]{\texttt{#1}}
\providecommand{\urlprefix}{URL }
\expandafter\ifx\csname urlstyle\endcsname\relax
  \providecommand{\doi}[1]{DOI:\discretionary{}{}{}#1}\else
  \providecommand{\doi}{DOI:\discretionary{}{}{}\begingroup
  \urlstyle{rm}\Url}\fi
\providecommand{\eprint}[2][]{\url{#2}}

\bibitem{Zelen1974}
Zelen M.
\newblock The randomization and stratification of patients to clinical trials.
\newblock \emph{Journal of Chronic Diseases} 1974; 27(7-8): 365--375.

\bibitem{Efron1971}
Efron B.
\newblock Forcing a sequential experiment to be balanced.
\newblock \emph{Biometrika} 1971; 58(3): 403--417.

\bibitem{Taves1974}
Taves DR.
\newblock Minimization: A new method of assigning patients to treatment and
  control groups.
\newblock \emph{Clinical Pharmacology \& Therapeutics} 1974; 15(5): 443--453.

\bibitem{Pocock1975}
Pocock SJ and Simon R.
\newblock Sequential treatment assignment with balancing for prognostic factors
  in the controlled clinical trial.
\newblock \emph{Biometrics} 1975; 31(1): 103--115.

\bibitem{Hu2012}
Hu Y and Hu F.
\newblock Asymptotic properties of covariate-adaptive randomization.
\newblock \emph{Annals of Statistics} 2012; 40(3): 1794--1815.

\bibitem{Taves2010}
Taves DR.
\newblock The use of minimization in clinical trials.
\newblock \emph{Contemporary Clinical Trials} 2010; 31(2): 180--184.

\bibitem{Lin2015}
Lin Y, Zhu M and Su Z.
\newblock The pursuit of balance: an overview of covariate-adaptive
  randomization techniques in clinical trials.
\newblock \emph{Contemporary Clinical Trials} 2015; 45: 21--25.

\bibitem{Ciolino2019}
Ciolino JD, Palac HL, Yang A et~al.
\newblock Ideal vs. real: a systematic review on handling covariates in
  randomized controlled trials.
\newblock \emph{BMC Medical Research Methodology} 2019; 19(1): 136.

\bibitem{Frane1998}
Frane JW.
\newblock A method of biased coin randomization, its implementation, and its
  validation.
\newblock \emph{Drug Information Journal} 1998; 32(2): 423--432.

\bibitem{Stigsby2010}
Stigsby B and Taves DR.
\newblock Rank-minimization for balanced assignment of subjects in clinical
  trials.
\newblock \emph{Contemporary Clinical Trials} 2010; 31(2): 147--150.

\bibitem{Su2011}
Su Z.
\newblock Balancing multiple baseline characteristics in randomized clinical
  trials.
\newblock \emph{Contemporary Clinical Trials} 2011; 32(4): 547--550.

\bibitem{Ma2013}
Ma Z and Hu F.
\newblock Balancing continuous covariates based on kernel densities.
\newblock \emph{Contemporary Clinical Trials} 2013; 34(2): 262--269.

\bibitem{Zhao2015}
Zhao W, Hill MD and Palesch Y.
\newblock Minimal sufficient balance—a new strategy to balance baseline
  covariates and preserve randomness of treatment allocation.
\newblock \emph{Statistical Methods in Medical Research} 2015; 24(6):
  989--1002.

\bibitem{McEntegart2003}
McEntegart DJ.
\newblock The pursuit of balance using stratified and dynamic randomization
  techniques: An overview.
\newblock \emph{Therapeutic Innovation \& Regulatory Science} 2003; 37(3):
  293--308.

\bibitem{Rosenberger2008}
Rosenberger WF and Sverdlov O.
\newblock Handling covariates in the design of clinical trials.
\newblock \emph{Statistical Science} 2008; 23(3): 404--419.

\bibitem{Hu2014}
Hu F, Hu Y, Ma Z et~al.
\newblock Adaptive randomization for balancing over covariates.
\newblock \emph{Wiley Interdisciplinary Reviews: Computational Statistics}
  2014; 6(4): 288--303.

\bibitem{ICH1998}
{ICH E9}.
\newblock Statistical principles for clinical trials.
\newblock \emph{{International Conference on Harmonisation}} 1998; .

\bibitem{EMA2015}
EMA.
\newblock Guideline on adjustment for baseline covariates in clinical trials.
\newblock \emph{{European Medicines Agency}} 2015; .

\bibitem{Shao2010}
Shao J, Yu X and Zhong B.
\newblock A theory for testing hypotheses under covariate-adaptive
  randomization.
\newblock \emph{Biometrika} 2010; 97(2): 347--360.

\bibitem{Sverdlov2015}
Sverdlov O.
\newblock \emph{Modern Adaptive Randomized Clinical Trials: Statistical and
  Practical Aspects}, volume~81.
\newblock CRC Press, 2015.

\bibitem{Kahan2014}
Kahan BC, Jairath V, Dor{\'e} CJ et~al.
\newblock The risks and rewards of covariate adjustment in randomized trials:
  an assessment of 12 outcomes from 8 studies.
\newblock \emph{Trials} 2014; 15(1): 139.

\bibitem{Shao2013}
Shao J and Yu X.
\newblock Validity of tests under covariate-adaptive biased coin randomization
  and generalized linear models.
\newblock \emph{Biometrics} 2013; 69(4): 960--969.

\bibitem{Ma2015}
Ma W, Hu F and Zhang L.
\newblock Testing hypotheses of covariate-adaptive randomized clinical trials.
\newblock \emph{Journal of the American Statistical Association} 2015;
  110(510): 669--680.

\bibitem{Bugni2018}
Bugni FA, Canay IA and Shaikh AM.
\newblock Inference under covariate-adaptive randomization.
\newblock \emph{Journal of the American Statistical Association} 2018;
  113(524): 1784--1796.

\bibitem{Ma2019}
Ma W, Qin Y, Li Y et~al.
\newblock Statistical inference for covariate-adaptive randomization
  procedures.
\newblock \emph{Journal of the American Statistical Association} 2020;
  115(531): 1488--1497.

\bibitem{Ma2020Regression}
{Ma} W, {Tu} F and {Liu} H.
\newblock Regression analysis for covariate-adaptive randomization: A robust
  and efficient inference perspective.
\newblock \emph{arXiv:200902287} 2020; .

\bibitem{Bugni2019}
Bugni FA, Canay IA and Shaikh AM.
\newblock Inference under covariate-adaptive randomization with multiple
  treatments.
\newblock \emph{Quantitative Economics} 2019; 10: 1747--1785.

\bibitem{Ma2020Sinica}
Ma W, Wang M and Zhu H.
\newblock Seamless phase {II/III} clinical trials with covariate adaptive
  randomization.
\newblock \emph{Statistica Sinica} 2020; in press.

\bibitem{Weng2017}
Weng H, Bateman R, Morris JC et~al.
\newblock Validity and power of minimization algorithm in longitudinal analysis
  of clinical trials.
\newblock \emph{Biostatistics \& Epidemiology} 2017; 1(1): 59--77.

\bibitem{Ye2020JRSSB}
Ye T and Shao J.
\newblock Robust tests for treatment effect in survival analysis under
  covariate-adaptive randomization.
\newblock \emph{Journal of the Royal Statistical Society: Series B (Statistical
  Methodology)} 2020; 82(5): 1301--1323.

\bibitem{Zhang2020}
Zhang Y and Zheng X.
\newblock Quantile treatment effects and bootstrap inference under
  covariate-adaptive randomization.
\newblock \emph{Quantitative Economics} 2020; 11(3): 957--982.

\bibitem{Gail1988}
Gail MH.
\newblock The effect of pooling across strata in perfectly balanced studies.
\newblock \emph{Biometrics} 1988; : 151--162.

\bibitem{Fan2018}
Fan L, Yeatts SD, Wolf BJ et~al.
\newblock The impact of covariate misclassification using generalized linear
  regression under covariate--adaptive randomization.
\newblock \emph{Statistical Methods in Medical Research} 2018; 27(1): 20--34.

\bibitem{McCullagh1989}
McCullagh P and Nelder J.
\newblock \emph{Generalized Linear Models, Second Edition}.
\newblock Chapman \& Hall/CRC Monographs on Statistics \& Applied Probability,
  Taylor \& Francis, 1989.
\newblock ISBN 9780412317606.

\bibitem{Birkett1985}
Birkett NJ.
\newblock Adaptive allocation in randomized controlled trials.
\newblock \emph{Controlled Clinical Trials} 1985; 6(2): 146--155.

\bibitem{Forsythe1987}
Forsythe AB.
\newblock Validity and power of tests when groups have been balanced for
  prognostic factors.
\newblock \emph{Computational Statistics \& Data Analysis} 1987; 5(3):
  193--200.

\bibitem{Hu2020}
Hu F and Zhang L.
\newblock On the theory of covariate-adaptive designs.
\newblock \emph{arXiv:200402994} 2020; .

\bibitem{Gail1984}
Gail MH, Wieand S and Piantadosi S.
\newblock Biased estimates of treatment effect in randomized experiments with
  nonlinear regressions and omitted covariates.
\newblock \emph{Biometrika} 1984; 71(3): 431--444.

\bibitem{Gail1988b}
Gail M, Tan WY and Piantadosi S.
\newblock Tests for no treatment effect in randomized clinical trials.
\newblock \emph{Biometrika} 1988; 75(1): 57--64.

\bibitem{Ma2017}
{Qin} Y, {Li} Y, {Ma} W et~al.
\newblock Pairwise sequential randomization and its properties.
\newblock \emph{arXiv:161102802v2} 2018; .

\bibitem{Li2019}
Li X, Zhou J and Hu F.
\newblock Testing hypotheses under adaptive randomization with continuous
  covariates in clinical trials.
\newblock \emph{Statistical Methods in Medical Research} 2019; 28(6):
  1609--1621.

\bibitem{Wang2020}
Wang T and Ma W.
\newblock The impact of misclassification on covariate-adaptive randomized
  clinical trials.
\newblock \emph{Biometrics} 2020; in press.

\end{thebibliography}

\newpage 

\begin{table}[htb]
	\begin{center}
	\caption{{\color{black} Simulated size in \% for logistic regression under various randomization methods (CR, PS, SB, HH), testing methods (Wald test, adjusted test), and sample sizes ($N=200, 500$). }}
		\bigskip
		\begin{tabular}{c c c c }
			$N$   & Randomization      & Wald Test &  Adjusted Test \\
			\hline
			\hline
			200   &   \textbf{CR}      & 4.50 & 5.08\\
			&   \textbf{PS}      & 1.30 & 4.08 \\
			&   \textbf{SB}      & 1.02 & 5.82 \\
			&   \textbf{HH}      & 1.16 & 6.34 \\
			\hline
			500   &   \textbf{CR}      & 5.64 & 5.78\\
			&   \textbf{PS}      & 1.20 & 4.16 \\
			&   \textbf{SB}      & 0.94 & 4.96 \\
			&   \textbf{HH}      & 1.06 & 5.34 \\
			\hline
		\end{tabular}
	\end{center}
	Note: N, sample size; CR, complete randomization; PS, Pocock and Simon's 
	minimization;
	SB, stratified permuted block design; HH, the method proposed by Hu and Hu \cite{Hu2012}.
\end{table}

\bigskip

\begin{table}[htb]
	\begin{center}
	\caption{{\color{black} Simulated size in \% for Poisson regression under various randomization methods (CR, PS, SB, HH), testing methods (Wald test, adjusted test), and sample sizes ($N=200, 500$). }}
		\bigskip
		\begin{tabular}{c c c c }
			$N$   & Randomization      & Wald Test &  Adjusted Test \\
			\hline
			\hline
			200   &   \textbf{CR}      & 15.22 & 5.32\\
			&   \textbf{PS}      & 5.60 & 4.16 \\
			&   \textbf{SB}      & 5.06 & 5.28 \\
			&   \textbf{HH}      & 5.02 & 5.18 \\
			\hline
			500   &   \textbf{CR}      & 15.40 & 5.52\\
			&   \textbf{PS}      & 5.40 & 4.56 \\
			&   \textbf{SB}      & 5.00 & 5.06 \\
			&   \textbf{HH}      & 5.20 & 5.42 \\
			\hline
		\end{tabular}
	\end{center}
	Note: N, sample size; CR, complete randomization; PS, Pocock and Simon's 
	minimization;
	SB, stratified permuted block design; HH, the method proposed by Hu and Hu \cite{Hu2012}.
\end{table}

\begin{table}[htb]
	\begin{center}
	\caption{{\color{black} Simulated size in \% for exponential model under various randomization methods (CR, PS, SB, HH), testing methods (Wald test, adjusted test), and sample sizes ($N=200, 500$). }}
		\bigskip
		\begin{tabular}{c c c c }
			$N$   & Randomization      & Wald Test &  Adjusted Test \\
			\hline
			\hline
			200   &   \textbf{CR}      & 21.02 & 5.68\\
			&   \textbf{PS}      & 17.00 & 6.32 \\
			&   \textbf{SB}      & 14.82 & 5.46 \\
			&   \textbf{HH}      & 13.76 & 5.64 \\
			\hline
			500   &   \textbf{CR}      & 22.22 & 4.94\\
			&   \textbf{PS}      & 16.30 & 5.24 \\
			&   \textbf{SB}      & 13.46 & 4.96 \\
			&   \textbf{HH}      & 14.46 & 4.98 \\
			\hline
		\end{tabular}
	\end{center}
	Note: N, sample size; CR, complete randomization; PS, Pocock and Simon's 
	minimization;
	SB, stratified permuted block design; HH, the method proposed by Hu and Hu \cite{Hu2012}.
\end{table}

\bigskip

\begin{figure}
\centering
\includegraphics[width=15cm,height=6cm]{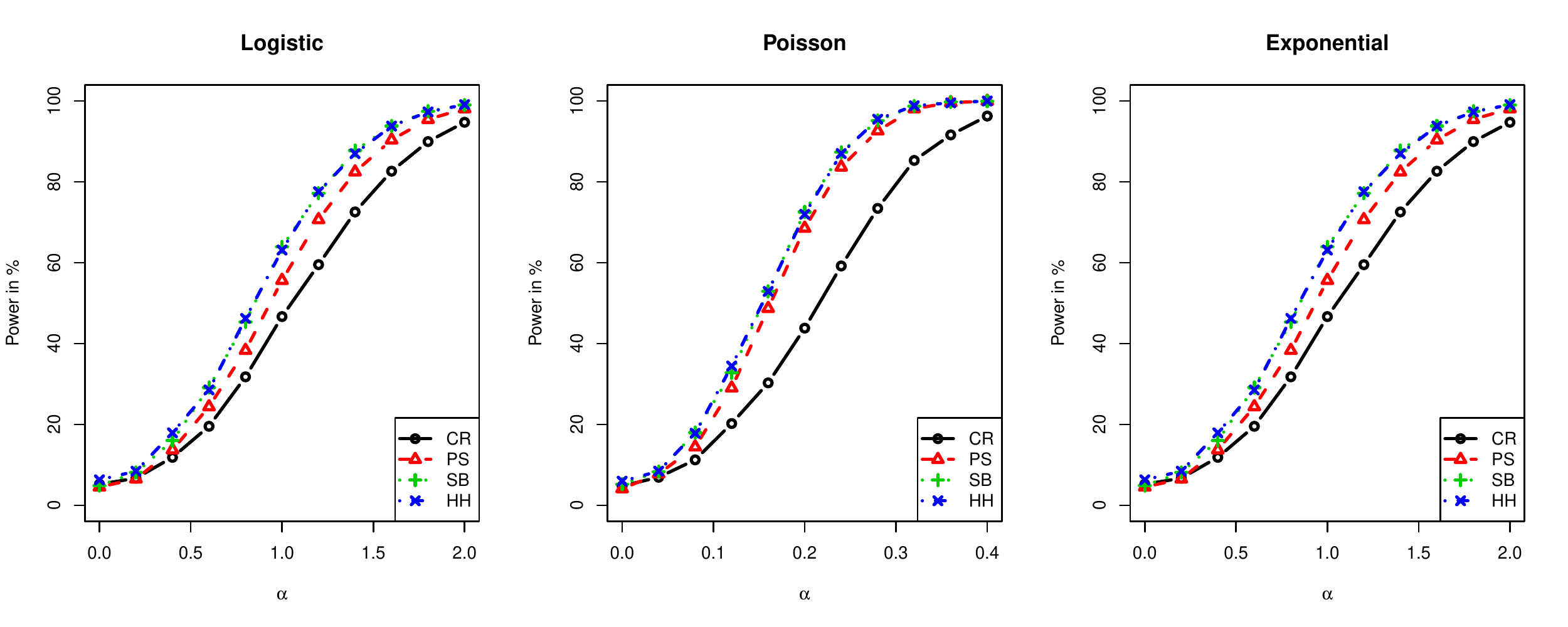}
\caption{{ 
			{\color{black} Simulated power of adjusted test for generalized linear models under various randomization methods (CR, PS, SB, HH) and sample size ($N=200$). }
			N, sample size; CR, complete randomization; PS, Pocock and Simon's 
	minimization;
	SB, stratified permuted block design; HH, the method proposed by Hu and Hu \cite{Hu2012}.}}
\end{figure}

\clearpage
\newpage

\section*{Appendix A: Proofs}

We only prove Theorem \ref{thm:TestGeneral-NC} for a covariate-adaptive design that satisfies Conditions (A)-(B) under generalized linear models with non-canonical link functions; an analogous (and simpler) argument establishes Theorems \ref{thm:TestWithinStratumBounded},  \ref{thm:TestGeneral}, and \ref{thm:TestWithinStratumBounded-NC}. We first prove the following lemmas.
For notational simplicity, $\nu_i=h(\mu+\beta x_i)$ is used throughout the proof section.

\begin{lemma}\label{lem:lem1}
	Suppose that Conditions (A)-(B) are satisfied; then under the null hypothesis 
	$H_0: \delta=0$, we have, as $n \rightarrow \infty$,
	\begin{align*}
	\left(\frac{1}{\sqrt{n}}\sum_{i=1}^{n}(2T_i-1)(y_i-\nu_i),
	\frac{1}{\sqrt{n}}\sum_{i=1}^{n}(2T_i-1)\nu_i\right) \Dcon
	\left(\xi_1,\xi_2\right),
	\end{align*}
	where  $\xi_1$ and $\xi_2$ are independent, and $\xi_1 \sim N(0, 
	\Ex[\var(Y|X)])$ and $\xi_2 \sim N(0,\sigma_h^2)$.
\end{lemma}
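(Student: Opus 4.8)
The plan is to exploit the key structural feature of the problem: conditional on the \emph{design}---the $\sigma$-field $\mathcal{F}_n = \sigma(x_1,\ldots,x_n, T_1,\ldots,T_n)$ generated by the covariates and the treatment assignments---the responses $y_1,\ldots,y_n$ are independent, while the second coordinate $V_n := n^{-1/2}\sum_{i=1}^n (2T_i-1)\nu_i$ is itself $\mathcal{F}_n$-measurable. Writing $\varepsilon_i = y_i - \nu_i$, under $H_0: \delta=0$ the true model \eqref{eq:true_model} has $\theta$ and hence the whole conditional law of $Y$ depending on $(T,X)$ only through $\mu+\beta X$, so $\Ex[\varepsilon_i \mid \mathcal{F}_n] = 0$ and $\var(\varepsilon_i \mid \mathcal{F}_n) = \var(Y_i \mid X_i)$. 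The first coordinate $U_n := n^{-1/2}\sum_{i=1}^n (2T_i-1)\varepsilon_i$ is therefore, conditionally on $\mathcal{F}_n$, a normalized sum of independent mean-zero terms. Since the marginal convergence $V_n \Dcon N(0,\sigma_h^2)$ is precisely Condition (B), the work lies in the first coordinate and, crucially, in decoupling the two.

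First I would establish a conditional central limit theorem for $U_n$. Because $(2T_i-1)^2 = 1$, the conditional variance is
\begin{align*}
\var(U_n \mid \mathcal{F}_n) = \frac{1}{n}\sum_{i=1}^n \var(Y_i \mid X_i),
\end{align*}
which depends only on the covariates; as the $x_i$ are i.i.d.\ and discrete (finitely many strata), the weak law of large numbers gives $\var(U_n \mid \mathcal{F}_n) \Pcon \Ex[\var(Y\mid X)]$. Discreteness of $X$ also settles the conditional Lindeberg condition
\begin{align*}
\frac{1}{n}\sum_{i=1}^n \Ex\!\left[\varepsilon_i^2\, \mathbf{1}\{|\varepsilon_i| > \epsilon\sqrt{n}\} \,\big|\, \mathcal{F}_n\right] \Pcon 0,
\end{align*}
since the $\varepsilon_i$ follow one of finitely many fixed within-stratum laws, each with finite variance, so $\{\varepsilon_i^2\}$ is uniformly integrable. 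The conditional Lindeberg--Feller theorem then yields that $\psi_n(t) := \Ex[e^{itU_n}\mid \mathcal{F}_n]$ converges in probability to the deterministic limit $g(t):=\exp(-t^2 \Ex[\var(Y\mid X)]/2)$ for each fixed $t$.

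To obtain joint convergence with independent limits, I would work with the joint characteristic function and condition on $\mathcal{F}_n$. Since $V_n$ is $\mathcal{F}_n$-measurable,
\begin{align*}
\Ex\!\left[e^{i t_1 U_n + i t_2 V_n}\right] = \Ex\!\left[e^{i t_2 V_n}\,\psi_n(t_1)\right] = g(t_1)\,\Ex\!\left[e^{i t_2 V_n}\right] + \Ex\!\left[e^{i t_2 V_n}\bigl(\psi_n(t_1)-g(t_1)\bigr)\right].
\end{align*}
The first term converges to $g(t_1)\exp(-t_2^2\sigma_h^2/2)$ by Condition (B); the second vanishes by bounded convergence, because $|e^{it_2V_n}|\le 1$ and $\psi_n(t_1)-g(t_1)\Pcon 0$ with $|\psi_n(t_1)-g(t_1)|\le 2$. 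Hence the joint characteristic function converges to $\exp(-t_1^2\Ex[\var(Y\mid X)]/2)\exp(-t_2^2\sigma_h^2/2)$, the product of two Gaussian characteristic functions, which simultaneously gives joint normality and the independence of $\xi_1$ and $\xi_2$.

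The main obstacle is this decoupling step: marginal CLTs for the two coordinates are not enough, since $U_n$ and $V_n$ share the same randomization-driven dependence through $\{T_i\}$. The argument resolves it by upgrading the conditional CLT for $U_n$ to \emph{convergence in probability} of $\psi_n(t_1)$ to a nonrandom limit; the fact that this limit is free of $\mathcal{F}_n$ is exactly what renders $\xi_1$ asymptotically independent of the design, and therefore of $\xi_2$. Verifying the conditional Lindeberg condition uniformly over the random design is the technically delicate point, but it is tractable here because $X$ is discrete and each within-stratum response distribution is fixed.
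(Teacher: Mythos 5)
Your proposal is correct and follows essentially the same route as the paper's proof: condition on the design $\sigma$-field generated by $(x_1,\ldots,x_n,T_1,\ldots,T_n)$, establish a conditional CLT for $n^{-1/2}\sum(2T_i-1)(y_i-\nu_i)$ with limiting variance $\Ex[\var(Y|X)]$, exploit the measurability of $n^{-1/2}\sum(2T_i-1)\nu_i$ with respect to that $\sigma$-field, and decouple via dominated convergence together with Condition (B). The only differences are cosmetic --- you phrase the decoupling through joint characteristic functions where the paper uses joint distribution functions, and you verify the conditional Lindeberg condition explicitly (via discreteness of $X$) where the paper invokes the CLT without comment.
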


\begin{proof}[Proof]
	Let $\mathcal{X}=\{x_1,\ldots,x_n\}$ and $\mathcal{T}=\{T_1,\ldots,T_n\}$.
	As $T_i$ and $(y_i-\nu_i)$ are independent given $\mathcal{X}$ under the 
	null hypothesis, then
	\begin{align*}
	\Ex[\frac{1}{n}\sum (2T_i-1)(y_i-\nu_i)|\mathcal{X}]=\frac{1}{n}\sum 
	\Ex[(2T_i-1)|\mathcal{X}]\Ex[(y_i-\nu_i)|\mathcal{X}]=0,
	\end{align*}
	and using the fact $(2T_i-1)^2=1$,
	\begin{align*}
	\var[\frac{1}{n}\sum (2T_i-1)(y_i-\nu_i)|\mathcal{X}]=\frac{\sum 
		\Ex[(2T_i-1)^2 (y_i-\nu_i)^2|\mathcal{X}]}{n^2}=\frac{\sum 
		\var(y_i|x_i)}{n^2}.
	\end{align*}
	Therefore, by the central limit theorem, the conditional distribution of 
	\begin{align*}
	\frac{1}{\sqrt n}\sum (2T_i-1)(y_i-\nu_i)
	\end{align*}
	given $(\mathcal{X}, \mathcal{T})$ is asymptotically normal with mean zero and 
	variance $n^{-1}\sum \var(y_i|x_i)$, which converges to $\Ex[\var(Y|X)]$ by the 
	law of large numbers.
	
	We further let $A_n=n^{-1/2}\sum (2T_i-1)(y_i-\nu_i)$ and $B_n=n^{-1/2}\sum (2T_i-1)\nu_i$,
	then by the independence of $\sum (2T_i-1)(y_i-\nu_i)$ and $\sum (2T_i-1)\nu_i$ 
	given $(\mathcal{X}, \mathcal{T})$, we have, for any real numbers $a$ and $b$,
	\begin{align*}
	P\{A_n\le a,B_n\le b\}&=\Ex[P\{A_n\le a,B_n\le b|\mathcal{X}, \mathcal{T}\}]\\
	&=\Ex[P\{A_n\le a|\mathcal{X}, \mathcal{T}\}P\{B_n\le b|\mathcal{X}, 
	\mathcal{T}\}]\\
	&=\Ex[P\{A_n\le a|\mathcal{X}, \mathcal{T}\}I\{B_n\le b\}]\\
	&=\Ex[(P\{A_n\le a|\mathcal{X}, \mathcal{T}\}-P\{\xi_1\le a\})I\{B_n\le b\}]
	+P(\xi_1\le a)P(B_n\le b)\\
	&\rightarrow P(\xi_1\le a)P(\xi_2\le b),
	\end{align*}
	where the convergence follows from the dominated convergence theorem and 
	Condition (B).
	Hence the desired conclusion is proved.
\end{proof}

\begin{lemma}\label{lem:lem2}
	Suppose that Conditions (A)-(B) are satisfied; then under the null hypothesis 
	$H_0: \delta=0$, we have, as $n \rightarrow \infty$,
	\begin{align*}
	\frac{1}{n_1}\sum_{i=1}^{n}T_iy_i \Pcon \Ex[Y],
	\end{align*}
	and
	\begin{align*}
	\frac{1}{n_0}\sum_{i=1}^{n}(1-T_i)y_i \Pcon \Ex[Y],
	\end{align*}
	where $n_1=\sum_{i=1}^{n} T_i$, $n_0=\sum_{i=1}^{n}(1-T_i)=n-n_1$.
\end{lemma}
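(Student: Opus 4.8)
The plan is to prove the first convergence in detail; the second follows by a symmetric argument. The organizing observation is that under $H_0:\delta=0$ the treatment has no effect on the outcome, so conditional on $x_i$ the law of $y_i$ is the exponential-family distribution with mean $\nu_i=h(\mu+\beta x_i)$ and does not involve $T_i$. Since the covariates are i.i.d.\ and the conditional law of $y_i$ given $x_i$ is fixed, the outcomes $y_1,\ldots,y_n$ are themselves i.i.d.\ with common mean $\Ex[Y]=\Ex[\nu_i]$, regardless of the randomization used to generate $\{T_i\}$. The weak law of large numbers then gives $n^{-1}\sum_{i=1}^n y_i \Pcon \Ex[Y]$.

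First I would split the treatment indicator as $T_i=\frac{1}{2}\bigl(1+(2T_i-1)\bigr)$, which yields
\begin{align*}
\frac{1}{n}\sum_{i=1}^n T_i y_i=\frac{1}{2n}\sum_{i=1}^n y_i+\frac{1}{2n}\sum_{i=1}^n (2T_i-1)y_i.
\end{align*}
The first term converges to $\frac{1}{2}\Ex[Y]$ by the law of large numbers noted above. For the second term, I would invoke Lemma \ref{lem:lem1}: writing $n^{-1/2}\sum_{i=1}^n(2T_i-1)y_i=n^{-1/2}\sum_{i=1}^n(2T_i-1)(y_i-\nu_i)+n^{-1/2}\sum_{i=1}^n(2T_i-1)\nu_i$, Lemma \ref{lem:lem1} shows this converges in distribution to $\xi_1+\xi_2$ and is therefore $O_P(1)$. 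Note that only this tightness consequence, not the full joint limit, is needed here. Dividing by $\sqrt{n}$ gives $n^{-1}\sum_{i=1}^n(2T_i-1)y_i\Pcon 0$, and hence $n^{-1}\sum_{i=1}^n T_iy_i\Pcon \frac{1}{2}\Ex[Y]$.

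Next I would convert this into the stated normalization using Condition (A). Since $D_n=n_1-n_0=2n_1-n=O_P(1)$, we have $n_1/n=\frac{1}{2}+D_n/(2n)\Pcon\frac{1}{2}$, so $n/n_1\Pcon 2$. Applying Slutsky's theorem,
\begin{align*}
\frac{1}{n_1}\sum_{i=1}^n T_iy_i=\frac{n}{n_1}\cdot\frac{1}{n}\sum_{i=1}^n T_iy_i\Pcon 2\cdot\frac{1}{2}\Ex[Y]=\Ex[Y].
\end{align*}
For the second claim I would use $1-T_i=\frac{1}{2}\bigl(1-(2T_i-1)\bigr)$, which gives $n^{-1}\sum_{i=1}^n(1-T_i)y_i\Pcon\frac{1}{2}\Ex[Y]$ by the same two-term decomposition, and combine it with $n_0/n\Pcon\frac{1}{2}$ from the same computation.

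The algebraic splitting and the Slutsky bookkeeping are routine. The only step that genuinely uses the structure of covariate-adaptive randomization is the control of the cross term $n^{-1}\sum_{i=1}^n(2T_i-1)y_i$: the dependence among the $T_i$ induced by the randomization could in principle destroy consistency, and this is precisely what Conditions (A)-(B), through Lemma \ref{lem:lem1}, rule out. I therefore expect no obstacle beyond correctly invoking that lemma and verifying the marginal i.i.d.\ property of the $y_i$ under the null.
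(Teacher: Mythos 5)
Your proposal is correct and follows essentially the same route as the paper: the identical decomposition $\frac{1}{n_1}\sum T_iy_i=\frac{n}{2n_1}\bigl[\frac{1}{n}\sum(2T_i-1)y_i+\frac{1}{n}\sum y_i\bigr]$, with Lemma \ref{lem:lem1} killing the cross term, the law of large numbers handling $\frac{1}{n}\sum y_i$, and Condition (A) giving $n/(2n_1)\Pcon 1$. Your added remarks (that only the $O_P(1)$ tightness from Lemma \ref{lem:lem1} is needed, and that the $y_i$ are marginally i.i.d.\ under the null) simply make explicit what the paper leaves implicit.
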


\begin{proof}[\textbf{Proof}]
	We only show the first result, and the second result can be proved 
	similarly.	
	Note that
	\begin{align*}
	\frac{1}{n_1}\sum T_iy_i =
	\frac{n}{2n_1}\left[\frac{1}{n}\sum(2T_i-1)y_i+\frac{1}{n}\sum y_i\right].
	\end{align*}
	By Lemma \ref{lem:lem1}, we have
	\begin{align*}
	\frac{1}{n}\sum(2T_i-1)y_i\Pcon 0,
	\end{align*}
	which, together with Condition (A) and the law of large numbers, implies that
	\begin{align*}
	\frac{1}{n_1}T_iy_i \Pcon \Ex[Y].
	\end{align*}
\end{proof}

\begin{proof}[\textbf{Proof of Theorem \ref{thm:TestGeneral-NC}}]
	We first show that 
	\begin{align}\label{eq:appx-numerator}
	\hat{\delta}=\frac{1}{h'h^{-1}(\Ex[Y])}
	\frac{2\sum(2T_i-1)y_i}{n}+o_P(n^{-1/2}).
	\end{align}
	By Taylor expansion, we have
	\begin{align*}
	h^{-1}(\frac{\sum T_iy_i}{n_1})=h^{-1}(\Ex[Y])
	+h^{-1(1)}(\Ex[Y])(\frac{\sum T_iy_i}{n_1}-\Ex[Y])
	+\frac{1}{2}h^{-1(2)}(\xi)(\frac{\sum T_iy_i}{n_1}-\mathbb{E}y)^2,
	\end{align*}
	where $h^{-1(k)}$ denotes the $k$th derivative of $h^{-1}$, $k=1,2$, and $\xi$ 
	lies between ${\sum T_iy_i}/{n_1}$ and $\Ex[Y]$.
	
	Based on Lemma \ref{lem:lem2} and the fact that the overall imbalance is 
	bounded in probability, we have
	\begin{align*}
	\frac{\sum T_iy_i}{n_1}-\Ex[Y]&=\frac{2}{n}\sum T_iy_i-\Ex[Y]
	-\frac{\sum T_iy_i}{n_1}\frac{D_n}{n}\\
	&=\frac{2}{n}\sum T_iy_i-\Ex[Y]+o_P(n^{-1/2}).
	\end{align*}
	
	Furthermore, by Lemma \ref{lem:lem1} and the central limit theorem, we have
	\begin{align*}
	\sqrt{n}(\frac{\sum T_iy_i}{n_1}-\Ex[Y])=\frac{\sum(2T_i-1)y_i}{\sqrt{n}}
	+\frac{\sum y_i-n\Ex[Y]}{\sqrt{n}}+o_P(1)=O_P(1),
	\end{align*}
	which, together with Lemma \ref{lem:lem2}, implies that
	\begin{align*}
	h^{-1}(\frac{\sum T_iy_i}{n_1})=h^{-1}(\Ex[Y])
	+h^{-1(1)}(\Ex[Y])(\frac{2\sum T_iy_i}{n}-\Ex[Y])
	+o_P(n^{-1/2}),
	\end{align*}
	and, by symmetry, 
	\begin{align*}
	h^{-1}(\frac{\sum (1-T_i)y_i}{n_0})=h^{-1}(\Ex[Y])
	+h^{-1(1)}(\Ex[Y])(\frac{2\sum (1-T_i)y_i}{n}-\Ex[Y])
	+o_P(n^{-1/2}).
	\end{align*}
	
	Therefore, \eqref{eq:appx-numerator} is proved by noting that 
	$h^{-1(1)}(\Ex[Y])=1/h'h^{-1}(\Ex[Y])$.
	
	On the other side, the variances of $\hat\mu$ and $\hat\delta$ are estimated 
	based on the information matrix $I(\hat{\mu},\hat{\delta})$, where 
	$\hat{\eta}_i=\hat{\mu}+\hat{\delta}T_i$ and
	\begin{align*}
	I(\hat{\mu},\hat{\delta})=\frac{1}{\phi}
	\left[\begin{array}{cc}
	\sum h^{'}(\hat{\eta}_i)\gamma^{'}(\hat{\eta}_i) & \sum 
	T_ih^{'}(\hat{\eta}_i)\gamma^{'}(\hat{\eta}_i)\\
	\sum T_ih^{'}(\hat{\eta}_i)\gamma^{'}(\hat{\eta}_i)	& \sum 
	T_ih^{'}(\hat{\eta}_i)\gamma^{'}(\hat{\eta}_i)
	\end{array}\right].
	\end{align*}
	Notice that
	\begin{align*}
	\frac{1}{n}\sum h^{'}(\hat{\mu}+\hat{\delta} 
	T_i)\gamma^{'}(\hat{\mu}+\hat{\delta} 
	T_i)
	&=\frac{n_1}{n}h^{'}(\hat{\mu}+\hat{\delta})\gamma{'}(\hat{\mu}+\hat{\delta})
	+\frac{n_0}{n}h^{'}(\hat{\mu})\gamma{'}(\hat{\mu})\\
	&=\frac{n_1}{n}h^{'}h^{-1}(\frac{\sum T_iy_i}{n_1})\gamma^{'}h^{-1}(\frac{\sum 
		T_iy_i}{n_1})\\
	& +\frac{n_0}{n}h^{'}h^{-1}(\frac{\sum 
		(1-T_i)y_i}{n_0})\gamma^{'}h^{-1}(\frac{\sum (1-T_i)y_i}{n_0})\\
	&\Pcon h^{'}h^{-1}(\Ex[Y])\gamma^{'}h^{-1}(\Ex[Y]),
	\end{align*}
	and, similarly,
	\begin{align*}
	\frac{1}{n}\sum T_ih^{'}(\hat{\mu}+\hat{\delta} 
	T_i)\gamma^{'}(\hat{\mu}+\hat{\delta} 
	T_i)
	&=\frac{n_1}{n}h^{'}(\hat{\mu}+\hat{\delta})\gamma{'}(\hat{\mu}+\hat{\delta})\\
	&\Pcon \frac{1}{2}h^{'}h^{-1}(\Ex[Y])\gamma^{'}h^{-1}(\Ex[Y]).
	\end{align*}
	Then we have
	\begin{align*}
	\frac{1}{n}I(\hat{\mu},\hat{\delta})\Pcon
	\frac{h^{'}h^{-1}(\Ex[Y])\gamma^{'}h^{-1}(\Ex[Y])}{\phi}
	\left[\begin{array}{cc}
	1	& 1/2\\
	1/2	& 1/2
	\end{array}\right],
	\end{align*}
	which follows that 
	\begin{align*}
	nI^{-1}(\hat{\mu},\hat{\delta})\Pcon
	\frac{\phi}{h^{'}h^{-1}(\Ex[Y])\gamma^{'}h^{-1}(\Ex[Y])}
	\left[\begin{array}{cc}
	2	& -2\\
	-2	& 4
	\end{array}\right].
	\end{align*}
	As a result,
	\begin{align}\label{eq:appx-denominator}
	\sqrt{n}{\widehat{\se}(\hat\delta)}\Pcon 
	\left\{\frac{4\phi}{h^{'}h^{-1}(\Ex[Y])\gamma^{'}h^{-1}(\Ex[Y])}\right\}^
	{\frac{1}{2}}
	\end{align}
	
	Combining \eqref{eq:appx-numerator} and \eqref{eq:appx-denominator}, we have
	\begin{align}\label{prop:delta-test}
	S=\frac{\hat{\delta}}{\widehat{\se}(\hat\delta)}
	=\frac{\sum (2T_i-1)y_i/\sqrt{n}}{\sqrt{\phi 
			h^{'}h^{-1}(\Ex[Y])/\gamma^{'}h^{-1}(\Ex[Y])}}+o_P(1),
	\end{align}
	and the asymptotic distribution of $S$ then follows from Lemma \ref{lem:lem1} 
	and the continuous mapping theorem.	
\end{proof}

\section*{Appendix B: Additional Simulation Results}
All of the simulation results in the main text are based on models with 2$\times$2 strata.
In this section, we consider the test for treatment effect with a large number of 
strata present. 
For this, we simulate four discrete covariates, with 2, 2, 3, and 4 levels, 
respectively, resulting in 48 strata.
The true model is assumed as follows
\begin{align}
\Ex(y_i|T_i,x_i)=h(\mu+\delta T_i+\beta_1 x_{i,1}+\beta_2 x_{i,2}
+\beta_3^t x_{i,3}+\beta_4^t x_{i,4}),
\end{align}  
where $x_{i,1}$ and $x_{i,2}$ are binary covariates with the probability of 0.5 
to take 0 or 1, $x_{i,3}^t$ is a discrete covariate with three equally possible 
values coded as $(0,0)$, $(1,0)$, and $(0,1)$, and $x_{i,4}^t$ is a discrete 
covariate with four equally possible values coded as $(0,0,0)$, $(1,0,0)$, 
$(0,1,0)$ and $(0,0,1)$.
The covariates are assumed to be independent of each other within each patient and between patients.
We still consider the three types of generalized linear models in Section
\ref{sec:Simulation-Size}. 
The model parameters are $\mu=-1$, $\beta_1=1$, 
$\beta_2=-1$, $\beta_3^t=(-2,1)$ and $\beta_4^t=(1,2,3)$ for logistic 
regression, and $\mu=0.8$, $\beta_1=0.25$, $\beta_2=-0.25$, 
$\beta_3^t=(-0.5,0.25)$, and $\beta_4^t=(0.25,0.5,0.75)$ for Poisson 
regression and the exponential model.

\begin{center} [Tables B1-B3 here.] \end{center}

In Tables B1-B3, the simulation results of the Wald test show patterns similar to 
those in Tables 1-3 with 2$\times$2 strata.
Under the three types of covariate-adaptive randomization, the Type I errors of 
Wald test are conservative for logistic regression, near or slightly larger 
than the nominal level for Poisson regression, and too anti-conservative for 
the exponential model.
After adjustment, the Type I errors of the Wald test are closer to the nominal 
level.
As the adjustment is based on the asymptotic results, the adjustment effects 
are less satisfactory when $N=200$, especially for the exponential model. 
As the sample size increases, the asymptotic properties begin to set in and the 
adjusted Type I errors are near 5\% when $N=$ 500 or 1000.
The power results are presented in Figure B1, from which it is clear that 
the adjusted tests are more powerful under covariate-adaptive designs than  under complete randomization.

\begin{center} [Figure B1 here.] \end{center}


\setcounter{table}{0}
\renewcommand{\thetable}{B\arabic{table}}

\setcounter{figure}{0}
\renewcommand{\thefigure}{B\arabic{figure}}

\newpage 
\begin{table}[htb]
	\begin{center}
	\caption{{\color{black} Simulated size in \% for logistic regression with 48 strata under various randomization methods (CR, PS, SB, HH), testing methods (Wald test, adjusted test), and sample sizes ($N=200, 500$). }}
		\bigskip
		\begin{tabular}{c c c c }
			$N$   & Randomization      & Wald Test &  Adjusted Test \\
			\hline
			\hline
			200   &   \textbf{CR}      & 4.76 & 5.14\\
			&   \textbf{PS}      & 1.80 & 3.08 \\
			&   \textbf{SB}      & 2.02 & 6.94 \\
			&   \textbf{HH}      & 1.92 & 5.56 \\
			\hline
			500   &   \textbf{CR}      & 4.86 & 4.90\\
			&   \textbf{PS}      & 1.68 & 3.60 \\
			&   \textbf{SB}      & 1.40 & 5.22 \\
			&   \textbf{HH}      & 1.56 & 5.24 \\
			\hline
			1000   &   \textbf{CR}      & 4.64 & 4.68\\
			&   \textbf{PS}      & 1.62 & 4.24 \\
			&   \textbf{SB}      & 1.50 & 5.22 \\
			&   \textbf{HH}      & 1.60 & 5.44 \\
			\hline
		\end{tabular}
	\end{center}
	Note: N, sample size; CR, complete randomization; PS, Pocock and Simon's 
	minimization;
	SB, stratified permuted block design; HH, the method proposed by Hu and Hu \cite{Hu2012}.
\end{table}

\begin{table}[htb]
	\begin{center}
	\caption{{\color{black} Simulated size in \% for Poisson regression with 48 strata under various randomization methods (CR, PS, SB, HH), testing methods (Wald test, adjusted test), and sample sizes ($N=200, 500$). }}
		\bigskip
		\begin{tabular}{c c c c }
			$N$   & Randomization      & Wald Test &  Adjusted Test \\
			\hline
			\hline
			200   &   \textbf{CR}      & 13.10 & 5.08\\
			&   \textbf{PS}      & 5.86 & 2.60 \\
			&   \textbf{SB}      & 6.82 & 6.94 \\
			&   \textbf{HH}      & 6.20 & 6.16 \\
			\hline
			500   &   \textbf{CR}      & 12.34 & 4.90\\
			&   \textbf{PS}      & 5.84 & 4.42 \\
			&   \textbf{SB}      & 6.12 & 5.98 \\
			&   \textbf{HH}      & 5.70 & 5.72 \\
			\hline
			1000   &   \textbf{CR}      & 12.62 & 5.14\\
			&   \textbf{PS}      & 6.10 & 5.00 \\
			&   \textbf{SB}      & 5.20 & 5.10 \\
			&   \textbf{HH}      & 4.54 & 4.54 \\
			\hline
		\end{tabular}
	\end{center}
	Note: N, sample size; CR, complete randomization; PS, Pocock and Simon's 
	minimization;
	SB, stratified permuted block design; HH, the method proposed by Hu and Hu \cite{Hu2012}.
\end{table}

\begin{table}[htb]
	\begin{center}
	\caption{{\color{black} Simulated size in \% for exponential model with 48 strata under various randomization methods (CR, PS, SB, HH), testing methods (Wald test, adjusted test), and sample sizes ($N=200, 500$). }}
		\bigskip
		\begin{tabular}{c c c c }
			$N$   & Randomization      & Wald Test &  Adjusted Test \\
			\hline
			\hline
			200   &   \textbf{CR}      & 43.94 & 6.00\\
			&   \textbf{PS}      & 40.88 & 3.38 \\
			&   \textbf{SB}      & 33.36 & 9.60 \\
			&   \textbf{HH}      & 34.66 & 10.12 \\
			\hline
			500   &   \textbf{CR}      & 45.84 & 5.08\\
			&   \textbf{PS}      & 44.14 & 4.52 \\
			&   \textbf{SB}      & 33.98 & 6.26 \\
			&   \textbf{HH}      & 34.46 & 6.08 \\
			\hline
			1000   &   \textbf{CR}      & 46.32 & 4.82\\
			&   \textbf{PS}      & 44.24 & 4.60 \\
			&   \textbf{SB}      & 35.50 & 5.62 \\
			&   \textbf{HH}      & 33.90 & 5.16 \\
			\hline
			
		\end{tabular}
	\end{center}
	Note: N, sample size; CR, complete randomization; PS, Pocock and Simon's 
	minimization;
	SB, stratified permuted block design; HH, the method proposed by Hu and Hu \cite{Hu2012}.
\end{table}

\begin{figure}
\centering
\includegraphics[width=15cm,height=6cm]{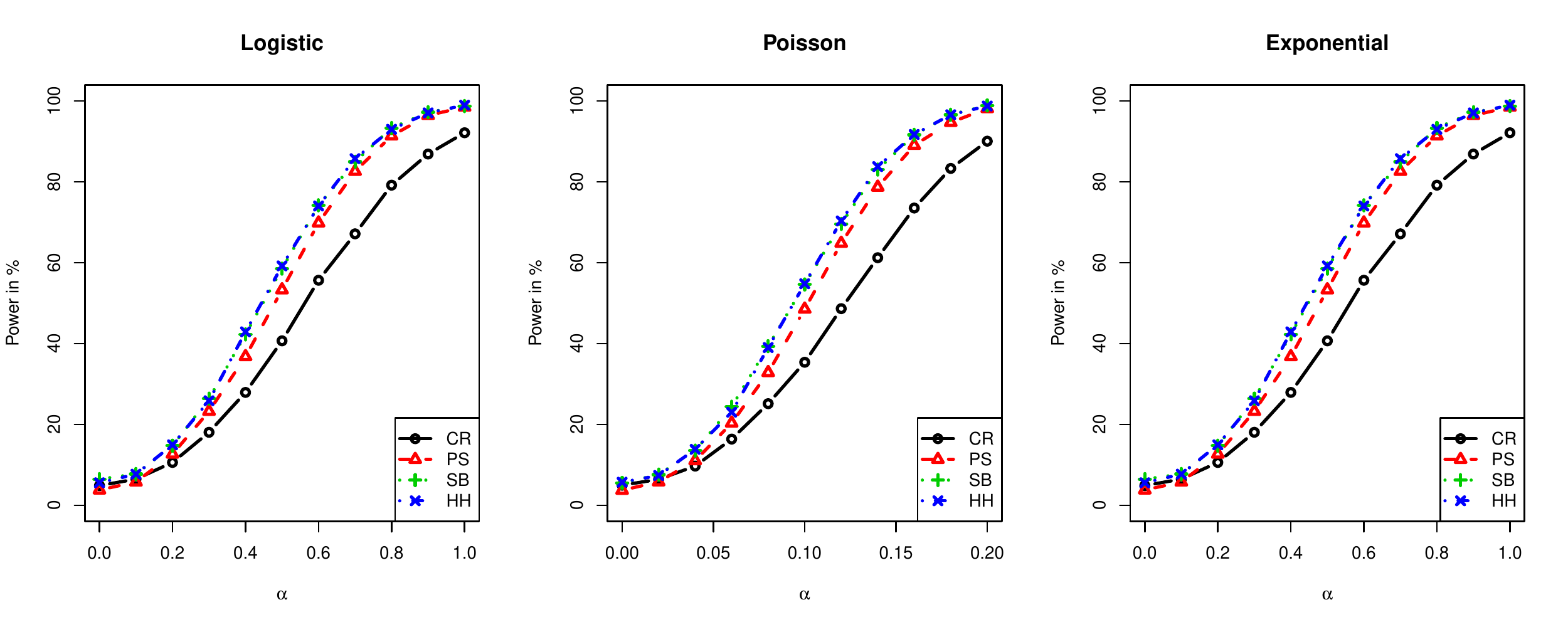}
\caption{{
			{\color{black} Simulated power of adjusted test for generalized linear models with 48 strata under various randomization methods (CR, PS, SB, HH) and sample size ($N=500$). }
			N, sample size; CR, complete randomization; PS, Pocock and Simon's 
	minimization;
	SB, stratified permuted block design; HH, the method proposed by Hu and Hu \cite{Hu2012}.}}
\end{figure}

\end{document}